%% file: TFPS.tex
\documentclass[runningheads]{llncs}

\usepackage[T1]{fontenc}

\usepackage{graphicx}

\usepackage{makeidx}  
\usepackage{graphicx}
\usepackage{color}

\usepackage{amsfonts}
\usepackage{amsmath}
\usepackage{amsthm} 
\usepackage{enumerate}

\usepackage{mathrsfs}
\usepackage{cases}

\usepackage{times}
\usepackage{soul}
\usepackage{url}
\usepackage[hidelinks]{hyperref}
\usepackage[utf8]{inputenc}
\usepackage[small]{caption}
\usepackage{graphicx}
\usepackage{amsmath}
\usepackage{amsthm}
\usepackage{booktabs}
\usepackage{algorithm}
\usepackage{algorithmic}
\usepackage[switch]{lineno}

\usepackage{tcolorbox}

\usepackage{amssymb}
\usepackage{amsmath}
\usepackage{amsthm}

\newtheoremstyle{mydefinition} 
    {}                         
    {}                         
    {\it}                      
    {}                         
    {\bf}                      
    {.}                        
    {.5em}                     
    {}     
\theoremstyle{mydefinition}

\usepackage{calc}

\usepackage{complexity}
\usepackage{tikz}
\usetikzlibrary{graphs, fit, calc, decorations.pathreplacing, chains, calligraphy, tikzmark, positioning, backgrounds, arrows.meta}

\newcommand{\nout}{N_{\mathrm{out}}}
\newcommand{\nin}{N_{\mathrm{in}}}

\newcommand{\mm}{\mathcal{M}}

\newcommand{\res}{\text{special}}
\newcommand{\DAGins}{\text{neighbor-acyclic instance}}

\newcommand{\sfvsvnum}{\text{the sfvs-v number}}
\newcommand{\sfasvnum}{\text{the sfas-v number}}
\newcommand{\sfasinnum}{\text{the sfas-in number}}
\newcommand{\sfasoutnum}{\text{the sfas-out number}}

\usepackage{enumitem}

\begin{document}

\title{The Complexity of Tournament Fixing: Subset FAS Number and Acyclic Neighborhoods
}
\titlerunning{The Complexity of TFP: SFAS and Acyclic Neighborhoods}



\author{Yuxi Liu\orcidID{0009-0009-2171-9042} \and
Junqiang Peng\orcidID{0000-0003-2742-5562} \and
Mingyu Xiao\orcidID{0000-0002-1012-2373}}
\authorrunning{Y. Liu et al.}
\institute{University of Electronic Science and Technology of China, Chengdu 610000, China\\
\email{yuxiliu823@gmail.com, jqpeng0@foxmail.com, myxiao@uestc.edu.cn}}


\maketitle              

\begin{abstract}
    The \textsc{Tournament Fixing Problem} (TFP) asks whether a knockout tournament can be scheduled to guarantee that a given player $v^*$ wins. Although TFP is $\NP$-hard in general, it is known to be \emph{fixed-parameter tractable} ($\FPT$) when parameterized by the feedback arc/vertex set number, or the in/out-degree of $v^*$ (AAAI 17; IJCAI 18; AAAI 23; AAAI 26). However, it remained open whether TFP is $\FPT$ with respect to the \emph{subset FAS number of $v^*$} --- the minimum number of arcs intersecting all cycles containing $v^*$ --- a parameter that is never larger than the aforementioned ones (AAAI 26). In this paper, we resolve this question negatively by proving that TFP stays $\NP$-hard even when the subset FAS number of $v^*$ is constant $\geq 1$ and either the subgraph induced by the in-neighbors $D[N_{\mathrm{in}}(v^*)]$ or the out-neighbors $D[N_{\mathrm{out}}(v^*)]$ is acyclic. Conversely, when both $D[N_{\mathrm{in}}(v^*)]$ and $D[N_{\mathrm{out}}(v^*)]$ are acyclic, we show that TFP becomes $\FPT$ parameterized by the subset FAS number of $v^*$. Furthermore, we provide sufficient conditions under which $v^*$ can win even when this parameter is unbounded.

\end{abstract}

\section{Introduction}
Knockout tournaments represent a fundamental class of competition structures, characterized by a sequence of recursive elimination rounds~\cite{horen1985comparing,connolly2011tournament,groh2012optimal}. 
In each round, players are partitioned into disjoint pairs (matches). 
For each pair, the loser is eliminated, and the winner advances to the next round. 
The process repeats until a single champion remains. 
Owing to their efficiency and simplicity, these structures are employed in large-scale events such as the FIFA World Cup~\cite{scarf2011numerical} and the NCAA Basketball Tournament~\cite{kvam2006logistic}. 
Beyond sports, they serve as critical models for elections, organizational decision-making, and various game-based settings where sequential elimination serves as a natural selection mechanism~\cite{kim2017can,stanton2011manipulating,ramanujan2017rigging}. 
Consequently, knockout tournaments have attracted considerable attention in artificial intelligence~\cite{vu2009complexity,williams2010fixing}, economics, and operations research~\cite{rosen1985prizes,mitchell1983toward,laslier1997tournament}, motivating a rich body of theoretical and applied studies.

Given a favorite player $v^*$, a fundamental computational challenge is to determine whether the tournament be arranged to ensure that $v^*$ wins.
Let $N$ denote the set of $n$ players, where $n = 2^c$ players (for some $c \in \mathbb{N}$). 
The structure of a knockout tournament is represented by a complete (unordered) binary tree $T$ with $n$ leaves. 
A \emph{seeding} is a bijection $\sigma : N \to \text{leaves}(T)$ that assigns each player to a distinct leaf. 
The tournament proceeds in rounds: in each round, for any internal node, the winners of the subtrees rooted at its children compete, with the winner advancing to this node. 
This continues until one player remains --- the tournament champion.

The question becomes meaningful when predictive information about match outcomes is available. 
We model this with a tournament digraph $D = (V = N, E)$, where $V$ is the set of $n$ players.  
For every distinct pair of vertices $\{u, v\} \subseteq V$, the arc set $E$ contains exactly one of the ordered pairs $(u, v)$ or $(v, u)$, signifying that $u$ defeats $v$ or vice versa. 
We investigate the following problem:

\begin{tcolorbox} 
    \textsc{Tournament Fixing Problem} (TFP)

    \noindent\textbf{Input: } A tournament $D$ and a player $v^* \in V(D)$.

    \noindent\textbf{Question: } Does there exist a seeding $\sigma$ for the $n = |V(D)|$ players such that $v^*$ wins the resulting knockout tournament?
\end{tcolorbox}

\subsection{Related Work}
The computational study of tournament manipulation was initiated by Vu et al.~\cite{vu2009complexity}. The complexity of TFP, specifically its $\NP$-hardness, remained open for several years~\cite{vu2009complexity,williams2010fixing,russell2011empirical,lang2012winner}. Aziz et al.~\cite{aziz2014fixing} finally proved that TFP is $\NP$-complete. They also provided exact exponential-time algorithms running in $\mathcal{O}(2.83^n)$ time with exponential space, or $4^{n + o(n)}$ time with polynomial space. Kim and Vassilevska Williams~\cite{kim2015fixing} later improved this to $2^n n^{\mathcal{O}(1)}$ time and space, and Gupta et al.~\cite{gupta2018winning} achieved the same time bound using only polynomial space.

Given the intractability of TFP in general, its parameterized complexity has been extensively investigated. We recall several key parameters studied in the literature and this paper:
\begin{itemize}
    \item \textbf{fas number} (resp., \textbf{fvs number}): the minimum number of arc reversals (resp., vertex deletions) needed to make the tournament acyclic;
    \item \textbf{in-degree} (resp., \textbf{out-degree}) of $v^*$: the number of in-neighbors (resp., out-neighbors) of $v^*$;
    \item \textbf{sfas-v number} (resp., \textbf{sfvs-v number}): the minimum number of arc reversals (resp., vertex deletions, excluding $v^*$) needed to remove $v^*$ from all cycles;
    \item \textbf{sfas-in number} (resp., \textbf{sfas-out number}): the minimum number of arc reversals required to exclude the in-neighborhood $N_{\text{in}}(v^*)$ (resp., out-neighborhood $N_{\text{out}}(v^*)$) from cycles. 
\end{itemize}

These parameters share the property that TFP becomes polynomial-time solvable when they equal zero. Their relationships are illustrated in Figure~\ref{fig:diagram} (an arrow from $x$ to $y$ indicates $x \leq y$).

\input{./fig_survey}

The fas number, positioned at the top of this hierarchy, was first shown to be fixed-parameter tractable ($\FPT$) by Ramanujan and Szeider~\cite{ramanujan2017rigging}, who gave an algorithm running in $p^{\mathcal{O}(p^2)} n^{\mathcal{O}(1)}$ time, where $p$ is the fas number. This was subsequently improved to $p^{\mathcal{O}(p)} n^{\mathcal{O}(1)}$ by Gupta et al.~\cite{gupta2018rigging}.
Later, Gupta et al.~\cite{gupta2019succinct} gave a polynomial kernel. For the fvs number $q$, Zehavi~\cite{zehavi2023tournament} proved the \(\FPT\) via a $q^{\mathcal{O}(q)} n^{\mathcal{O}(1)}$-time algorithm. This result subsumes the fas-number parameterization since $q \leq p$. 

Wang et al.~\cite{wang2026hard} established that TFP is $\FPT$ parameterized by either the out-degree $l$ or the in-degree $k$ of $v^*$, with running times $2^{2^l}n^{\mathcal{O}(1)}$ and $2^{2^{\mathcal{O}(k)}}n^{\mathcal{O}(1)}$, respectively. 

It is not hard to see that when the sfas-v number or sfvs-number is zero, TFP becomes polynomial-time solvable. 
Furthermore, \sfasvnum{} and \sfvsvnum{} are smaller than fas and fvs, and the in-degree and out-degree of $v^*$.
Whether TFP is $\FPT$ parameterized by the smaller parameters \sfasvnum{} or \sfvsvnum{} becomes an interesting open problem in the literature~\cite{wang2026hard}.

\subsection{Our Contributions}
In this paper, we resolve the open problem concerning the parameterized complexity of TFP with respect to subset-based parameters centered around the favorite player $v^*$. Our main contributions are:

\begin{itemize}
    \item \textbf{Negative Results:} We prove that TFP remains $\NP$-hard when:
    \begin{enumerate}
        \item The sfas-v number (or sfvs-v number) is any constant $\geq 1$, answering the open question from~\cite{wang2026hard} in the negative.
        \item The sfas-in number (or sfas-out number) is 1.
    \end{enumerate}
    
    \item \textbf{Positive Results:} We identify tractable frontiers:
\begin{enumerate}
        \item TFP is $\FPT$ when parameterized by the sum of the sfas-in and sfas-out numbers.
        \item When both $D[N_{\mathrm{in}}(v^*)]$ and $D[N_{\mathrm{out}}(v^*)]$ are acyclic, TFP is $\FPT$ parameterized by the sfas-v number. 
    \end{enumerate}

    \item \textbf{Structural Insights:} For an instance $(D, v^*)$ with $\nin(v^*)$ and $\nout(v^*)$ both acyclic, we provide new sufficient conditions under which $v^*$ can be guaranteed to win, even when the relevant parameters are unbounded. These extend the known polynomial-time solvable cases and offer practical structural guarantees for tournament manipulability.
\end{itemize}

Our results complete the parameterized complexity landscape of TFP with respect to natural ``distance-from-acyclicity" parameters localized around $v^*$. The hardness results demonstrate that restricting cycles to involve only specific neighborhoods of $v^*$ does not generally lead to tractability, while the positive result identifies certain structural conditions that do.

\section{Preliminaries}
In this paper, we only focus on digraphs.
Let $D = (V, E)$ be a digraph with $n=|V|$ vertices and $m=|E|$ edges. 
We denote the vertex set and edge set of a digraph $D'$ as $V(D')$ and $E(D')$, respectively. 
We use $(u, v)$ to denote an arc from vertex $u$ to vertex $v$.
If there exists an arc $(u, v)\in E$, then we say that $u$ is an \emph{in-neighbor} of $v$ and $v$ is an \emph{out-neighbor} of $u$.
The \emph{out-neighborhood} and \emph{in-neighborhood} of a vertex $v$ are denoted by $\nout(v)$ and $\nin(v)$ respectively.
{For a vertex $v$, we further let $out(v) = |\nout(v)|$ and $in(v) = |\nin(v)|$.
For a vertex $v$ and a vertex subset $S\subseteq V$, let $out_S(v) = |\nout(v)\cap S|$ and $in_S(v) = |\nin(v)\cap S|$.}
For a vertex subset $X \subseteq V$, we define its out-neighborhood as $\nout(X) = (\bigcup_{v \in X} \nout(v)) \setminus X$ and its in-neighborhood as $\nin(X) = (\bigcup_{v \in X} \nin(v)) \setminus X$. 
We call a vertex $s$ a \emph{source} if $in(s) = 0$.

For a vertex subset $X \subseteq V$, the \emph{subgraph induced by $X$} is denoted by $D[X]$.
We write $D - X$ to denote the subgraph $D[V \setminus X]$.
For a terminal set $T\subseteq V$, a vertex set $X\subseteq V\setminus T$ is a \emph{subset feedback vertex set} of $D$ with terminal set $T$ if there is no cycle containing vertices in $T$ in $D - X$.
Similarly, an arc set $A\subseteq E$ is a \emph{subset feedback arc set} of $D$ with terminal set $T$ if there is no cycle containing vertices in $T$ in $D'$, where $D'$ is the resultant digraph after reversing arcs in $A$.
We will always use $v^*$ to denote the favorite in the tournament.

A knockout tournament with $n = 2^p$ players for some integer $p\geq 0$ is organized into $\log n$ successive rounds of competition. 
In each round $r \in [\log n]$, the $2^{\log n - r + 1}$ remaining players are paired into $2^{\log n - r}$ matches, with only the winners advancing to the next round.
A seeding $\sigma$ is a \emph{winning seeding} for $v^*$ if $v^*$ survives all $\log n$ rounds of the resulting knockout tournament.

Following the notions established in~\cite{wang2026hard}, we utilize the concepts of match sets and match set sequences to formally characterize the matches played in each round of a knockout tournament.

\begin{definition}[match set and sequence]\label{def:match}
    Let $D$ be a tournament with $n$ players. For a fixed seeding $\sigma$, we define: 
    \begin{itemize}
        \item \textbf{Match Set:} For each round $r \in [\log n]$, the \emph{match set} $M_r \subseteq E(D)$ is the set of arcs representing matches played in round $r$, where $|M_r| = 2^{\log n - r}$. Each pair $(u, v) \in M_r$ indicates that player $u$ defeated $v$. 
        Let $V(M_r)$ denote the set of players who participated in round $r$;
        \item \textbf{Match Set Sequence:} A \emph{match set sequence} is an ordered collection of match sets for all rounds, defined as $\mm = \{M_1, \dots, M_{\log n}\}$.
    \end{itemize}
    
    A match set sequence $\{M_1, \dots, M_{\log n}\}$ is \emph{valid} for $D$ if the following conditions hold:
    \begin{itemize}
        \item For every round $r \in [p]$, $M_r \subseteq E(D)$.
        \item $V(M_1) = V(D)$;
        \item For every round $r \in [\log n - 1]$, the set of winners in $M_r$ is exactly $V(M_{r+1})$.
    \end{itemize}
    
    Given a seeding $\sigma$, it induces a unique match set sequence, denoted by $\mm(\sigma)$. 
    In contrast, if a match set sequence $\mm^*$ is valid, then there exists at least one seeding $\sigma^*$ such that $\mm(\sigma^*) = \mm^*$.
\end{definition}

To formally track the progression of the tournament, we introduce notation for the sets of winners and eliminated players in each round.
Let $D$ be a tournament with $n$ players, and let $\mathcal{M}(\sigma) = (M_1, \dots, M_k)$ be the match set sequence induced by a seeding $\sigma$. 
For each round $r \in [\log n]$, we use $C_r(\sigma) = \{u \mid (u, v) \in M_r\}$ to denote the set of players remaining after round $r$. 
By convention, we define $C_0(\sigma)=V(D)$. 
Correspondingly, for each $r \in [\log n]$, we use $L_r(\sigma)=\{v \mid (u, v) \in M_r\}$ to denote the set of players eliminated in round $r$. 
Where the seeding $\sigma$ is clear from the context, we simply write $C_r$ and $L_r$.

\section{Parameterized by the Subset FAS Number}
In this section, we investigate the complexity of TFP when parameterized by the sfas-v/in/out number.
We will show that TFP is para-$\NP$-hard with respect to these parameters, thereby completing the parameterized complexity landscape illustrated in Figure \ref{fig:diagram}.

Before presenting our main results, 
we define a kind of restricted instances, called \emph{\res{}} instances, and characterize several properties of winning seedings for a \res{} instance.

\begin{definition}
    An instance $(D, v^*)$ is \res{} with respect to $a^*\in\nout(v^*)$ and $b^*\in\nin(v^*)$ if 
    \begin{enumerate}
        \item $|\nout(v^*)| = 3|\nin(v^*)| - 1$, where {$|\nin(v^*)| = 2^p$}  for some integer $p \geq 0$;
        \item 
        
        There is an arc $(a^*, b^*)\in E(D)$, and
        \item For each vertex $a\in \nout(v^*)$ and each vertex $b\in \nin(v^*)$ expect the case that $a = a^*$ and $b = b^*$, there is an arc $(b, a)\in E(D)$.
    \end{enumerate}
\end{definition}

\input{./res-ins}

See Fig.~\ref{fig:res} for an illustration for a \res{} instance $(D, v^*)$.

\begin{lemma}\label{lem:hardness-properties}
    Consider a \res{} instance $(D, v^*)$ where $n = |\nout(v^*)|$.
    Suppose $(D, v^*)$ is a yes-instance.
    Let $\sigma$ be an arbitrary winner seeding for $v^*$ in tournament $D$ and let its corresponding match set sequence be $\mathcal{M}^* =\{M_1^*,\dots, M_{(\log n) + 2}^*\}$.
    We have the following properties.
    \begin{enumerate}
        \item $V(M_{(\log n) + 1}^*) \cap \nin(v^*) = \{b^*\}$.
        \item $|V(M_{(\log n) + 1}^*) \cap \nout(v^*)| = 2$ and $a^*\in V(M_{(\log n) + 1}^*)$.
        \item 
        {For each round $r\in [\log n]$, every match in $M_r$ involves either two players in $A\cup\{v^*\}$ or two players in $B$.}

    \end{enumerate}

\end{lemma}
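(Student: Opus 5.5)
Throughout I read $n$ as $|\nin(v^*)|=2^p$. Then $|V(D)|=1+n+(3n-1)=4n$, so the tournament has $(\log n)+2$ rounds, which matches the indexing of $\mathcal{M}^*$; with $n=|\nout(v^*)|$ the round count would not come out right. Write $A=\nout(v^*)$ and $B=\nin(v^*)$, and let $\beta_r=|B\cap C_r|$ for $r\ge 0$, so that $\beta_0=n$. The whole proof is a counting argument on $\beta_r$, driven by a single observation about who can eliminate a player of $B$.

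\textbf{Step 1 (who can beat $B$).} By the definition of a \res{} instance, every $b\in B$ beats every $a\in A$, except that $a^*$ beats $b^*$. Also $v^*$ beats no vertex of $B$. Hence a vertex $b\in B\setminus\{b^*\}$ can only be eliminated in a $B$--$B$ match, and $b^*$ can only be eliminated in a $B$--$B$ match or by $a^*$.

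Now fix a round $r$. Let $x_r$ be the number of $B$--$B$ matches in $M_r$, and let $y_r\in\{0,1\}$ indicate whether $(a^*,b^*)\in M_r$. Every $B$-player in round $r$ that is not in one of these matches survives the round. Therefore
\[
\beta_r=\beta_{r-1}-x_r-y_r \quad\text{and}\quad 2x_r+y_r\le\beta_{r-1}.
\]
Two consequences follow. First, if $y_r=0$ then $\beta_r\ge\beta_{r-1}/2$, with equality exactly when every $B$-player of round $r$ meets another $B$-player. Second, $y_r=1$ can occur in at most one round, because $b^*$ is eliminated only once.

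\textbf{Step 2 (the end of the tournament).} Since $v^*$ wins, $C_{(\log n)+1}=\{v^*,u\}$ where $v^*$ beats $u$, so $u\in A$ and $\beta_{(\log n)+1}=0$. If $y_r=0$ for every round, then by Step 1 $\beta$ never drops below $1$, a contradiction. So $b^*$ is eliminated by $a^*$ in some round $r_0$, and in every other round $\beta$ at most halves.

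From $\beta_0=2^{\log n}$ and at most halving, we get $\beta_r\ge 2^{\log n-r}$ for all $r<r_0$. For $\beta$ to reach $0$ by round $(\log n)+1$, the single $y$-step must take $\beta$ from $1$ to $0$; indeed a $y$-step from $\beta\ge2$ leaves $\beta\ge1$ (since $x\le(\beta-1)/2$), and further halvings can never reach $0$. Hence:
\begin{itemize}
\item $r_0=(\log n)+1$;
\item $\beta_{\log n}=1$, and this last $B$-vertex is $b^*$;
\item $\beta_r=2^{\log n-r}$ exactly for every $r\le\log n$.
\end{itemize}
This forces equality in Step 1 in every round $r\le\log n$: every $B$-player in such a round is matched with a $B$-player. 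Since $v^*$ is also present in every round and the matches partition the round's players, the remaining matches pair players of $A\cup\{v^*\}$ among themselves. This is property 3.

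\textbf{Step 3 (reading off properties 1 and 2).} Round $(\log n)+1$ has $4n/2^{\log n}=4$ players. Its $B$-part is $B\cap C_{\log n}=\{b^*\}$, which is property 1. In that round $b^*$ must lose to $a^*$, so $a^*\in V(M^*_{(\log n)+1})$. Of the four players, $v^*$ and $b^*$ lie outside $A$. The fourth player is neither $v^*$ nor $b^*$, and it is not in $B$ because $b^*$ is the only $B$-player, so it lies in $A$. Thus $|V(M^*_{(\log n)+1})\cap A|=2$ with $a^*$ among them, which is property 2.

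The main obstacle is Step 2: showing rigorously that the one exceptional elimination of $b^*$ by $a^*$ cannot absorb any slack. This comes down to the integer bound $x_r\le\lfloor(\beta_{r-1}-y_r)/2\rfloor$, together with the fact that $\beta_{(\log n)+1}=0$ is forced because $v^*$'s final opponent lies in $A$.
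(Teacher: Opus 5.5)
Your proof is correct and follows essentially the paper's route. Both arguments track $|B\cap C_r|$ and use that a vertex of $B$ can be eliminated only by another vertex of $B$, except via the single arc $(a^*,b^*)$; from $|B|=2^{\log n}$ both conclude that this count must halve exactly until $b^*$ alone meets $a^*$ in round $(\log n)+1$. Your reading $n=|\nin(v^*)|$ is also the one the paper's proof actually uses.

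The difference is only in ordering. The paper first pins down the last two rounds to get properties~1 and~2, then proves property~3 by contradiction. Your exact accounting $\beta_r=\beta_{r-1}-x_r-y_r$ gives exact halving (property~3) first and then reads off properties~1 and~2. This also covers explicitly the cases the paper leaves implicit: a $B$-player facing $v^*$, and $(a^*,b^*)$ being played too early.
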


\begin{proof}
    
    {Assume that player $v^*$ beats player $x_1\in V(D)$ in the last match, and in the second last round, $v^*$ beats $x_2$ and $x_1$ beats $x_3$; that is, $M^*_{(\log n) + 1} = \{(v^*, x_2), (x_1, x_3)\}$ and $M^*_{(\log n) + 2} = \{(v^*, x_1)\}$ for some $x_1,x_2,x_3\in V(D)$.}
    Let $M^*_{(\log n) + 1} = \{(v^*, x_2), (x_1, x_3)\}$ and $M^*_{(\log n) + 2} = \{(v^*, x_1)\}$ for some $x_1,x_2,x_3\in V(D)$. 
    Since $v^*$ beats $x_1$ and $x_2$, we know that $x_1$ and $x_2$ are in $\nout(v^*)$. 
    Now we show that $x_3\in \nin(v^*)$.

    For round $r\in [\log n]$, let $B_{r - 1} = C_{r - 1}(\sigma)\cap \nin(v^*)$ be the players in $\nin(v^*)$ remain after round $r - 1$.
    By the definition of $D$, we know that for any player $b\in \nin(v^*)\setminus b^*$, $b$ can only be beaten by players in $\nin(v^*)$. 
    {If $b^*$ remains in this round, we know there are at most $\lceil\frac{1}{2}|B_{r - 1}\setminus b^*|\rceil$ vertices in $B_{r - 1}$ are eliminated. 
    If $b^*$ is eliminated in this round,  we know there are at most $\lfloor\frac{1}{2}|B_{r - 1}\setminus b^*|\rfloor$ vertices in $B_{r - 1} \setminus b^*$ are eliminated, which implies that there are at most $\lfloor\frac{1}{2}|B_{r - 1}\setminus b^*|\rfloor +1 = \lceil\frac{1}{2}|B_{r - 1}|\rceil$ vertices in $B_{r - 1}$ are eliminated.
    }

    Thus, we have that in each round $r\in [\log n]$, there are at most half of the vertices with indegree of $v^*$ are eliminated, i.e., 
    \[
        |L_r(\sigma)\cap \nin(v^*)|\leq \lceil \frac{1}{2}|B_{r - 1}|\rceil,
    \] 
    which means that 
    \[
        |B_{r}| \geq |B_{r - 1}| - \lceil \frac{1}{2}|B_{r - 1}|\rceil = \lfloor \frac{1}{2}|B_{r - 1}|\rfloor.
    \]

    Since $|B_0| = |\nin(v^*)| = n$ and $n = 2^p$ for some integer $p\geq 0$, we have that $|B_{\log n}|\geq 1$,
    which means that there are at least one vertex in $\nin(v^*)$ remains after round $\log n$. 
    Thus, $x_3\in \nin(v^*)$.
    Since $x_1$ beats $x_3$ and $x_1\in \nout(v^*)$, we know that $x_1 = a^*$ and $x_3 = b^*$.
    Thus, the first two properties hold.

    Before we show the correctness of the third property.
    Recall that there are at most $\lceil\frac{1}{2}|B_{r - 1}\setminus b^*|\rceil$ vertices in $B_{r - 1} \setminus b^*$ are eliminated in round $r\in [\log n]$.
    Since $b^*$ cannot be eliminated before round $(\log n) + 1$, we indeed have that for each round $r\in [\log n]$,
    \[
        |L_r(\sigma)\cap \nin(v^*)|\leq \lfloor\frac{1}{2}|B_{r - 1}|\rfloor,
    \]
    which means that 
    \[
        |B_{r}| \geq |B_{r - 1}| - \lfloor \frac{1}{2}|B_{r - 1}|\rfloor = \lceil \frac{1}{2}|B_{r - 1}|\rceil.
    \]

    Now we show that the third property holds.
    By contradiction, assume there exists a match $(b, a)\in M_{r_0}^*$ for some $r_0\in [\log n]$ such that $a\in \nout(v^*)$ and $b\in \nin(v^*)$.
    Let $r$ be the first round that there exists such match.
    We have that $|B_{r - 1}| = |\nin(v^*)|/ 2^{r - 1}$ is an even number.
    Since each player in $\nin(v^*)\setminus b^*$ can only be beaten by $\nin(v^*)$ and $b^*$ remains after round $\log n$, we have that 
    \[
        |B_r| > \frac{1}{2}|B_{r - 1}|.
    \]
    Thus, we have that 
    \[
        |B_{\log n}| > \frac{1}{2^{\log n}}|\nin(v^*)| = 1,
    \]
    which means that there are at least two players in $\nin(v^*)$ remains after round $\log n$.
    However, $v^*$, $x_1$ and $x_2$ are not in $\nin(v^*)$, a contradiction.
    This completes the proof of the third condition.

\end{proof}

Consider a special yes-instance $(D, v^*)$ and an arbitrary winning seeding $\sigma$.
The three properties stated in Lemma~\ref{lem:hardness-properties} hold for $\sigma$.
The corresponding complete binary tree $T$ corresponding to $\sigma$ is shown in Fig.~\ref{fig:hardness-property}.
One of the main results of this section is the following theorem:

    \begin{figure}[t]
        \centering
        \input{./example-special-ins}
        \caption{An illustration for the complete binary tree $T$ corresponding to a winner seeding $\sigma$ for a \res{} instance $(D, v^*)$, where $a'$ is a player in $\nout(v^*)$.}
        \label{fig:hardness-property}
    \end{figure}

\begin{theorem}\label{thm:hardness}
    For an input tournament $D$ and a favorite player $v^*\in V(D)$, TFP is $\NP$-hard even when \sfasvnum{} equals 1.
\end{theorem}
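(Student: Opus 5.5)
The plan is a polynomial reduction from general TFP, which is $\NP$-hard~\cite{aziz2014fixing}, to \res{} instances. We then use Lemma~\ref{lem:hardness-properties} to show that a winning seeding of the constructed instance must contain a winning seeding of the original instance. Let $(D', v')$ be a TFP instance with $|V(D')| = 2^p$; padding to a power of two is standard, since the input already has to be of this size. We build $D$ as follows:
\begin{itemize}
\item $B := V(D')$ with $D[B] := D'$ and $b^* := v'$;
\item $A$ is a set of $3\cdot 2^p - 1$ fresh vertices, with $D[A]$ transitive and $a^*$ its top vertex, i.e.\ $a^*$ beats all of $A \setminus \{a^*\}$;
\item $v^*$ beats every vertex of $A$, and every vertex of $B$ beats $v^*$;
\item every $b \in B$ beats every $a \in A$, except for the single arc $(a^*, b^*)$.
\end{itemize}
Then $\nin(v^*) = B$ with $|B| = 2^p$ and $\nout(v^*) = A$ with $|A| = 3|B| - 1$, so $(D, v^*)$ is \res{} with respect to $a^*, b^*$. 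The total number of vertices is $2^{p+2}$, a power of two. Throughout, the lemma's index $\log n$ is read as $p = \log|\nin(v^*)|$, so the tournament has $p+2$ rounds.

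First we check the parameter. Every cycle through $v^*$ leaves $v^*$ into $A$ and returns from $B$, so it must use some arc from $A$ to $B$. The only such arc is $(a^*, b^*)$. Reversing it therefore destroys all cycles through $v^*$, and the cycle $v^* \to a^* \to b^* \to v^*$ shows that at least one reversal is needed. Hence \sfasvnum{} equals exactly $1$. As a side remark, $D[\nout(v^*)]$ is acyclic here, which matches the strengthened hardness claimed in the abstract.

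Next, the direction from a yes-instance $(D',v')$ to a yes-instance $(D,v^*)$. Take the bracket of size $2^{p+2}$ and split it into four quarter-subtrees of size $2^p$:
\begin{itemize}
\item the first quarter holds $v^*$ and $2^p - 1$ vertices of $A\setminus\{a^*\}$;
\item the second quarter holds $2^p$ further vertices of $A\setminus\{a^*\}$, whose winner is some $a'$;
\item the third quarter holds $a^*$ and the remaining $2^p - 1$ vertices of $A$;
\item the fourth quarter holds $B$, seeded by the winning seeding of $D'$ for $v'$.
\end{itemize}
The counts add up, since $(2^p - 1) + 2^p + (2^p - 1) = |A| - 1$. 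Because $v^*$ and $a^*$ each beat everything else in $A$, they win their quarters, and $b^*$ wins the fourth quarter. In the semifinals $v^*$ beats $a'$ and $a^*$ beats $b^*$, and in the final $v^*$ beats $a^*$.

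Conversely, suppose $\sigma$ is a winning seeding for $v^*$ in $D$. By Lemma~\ref{lem:hardness-properties}(3), in each of the first $p$ rounds every match in which a $B$-player takes part is against another $B$-player. By property (1), exactly $b^*$ survives from $B$ after round $p$. Restricting $\sigma$'s match set sequence to $B$ therefore gives a valid match set sequence for $D[B] = D'$ over $p = \log|B|$ rounds, with $b^* = v'$ as champion. By Definition~\ref{def:match}, this sequence corresponds to a winning seeding for $v'$ in $D'$.

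The main obstacle is this last restriction step. It needs $B$ to form a closed sub-bracket, with every round's $B$-matches pairing $B$-players only and halving $B$ exactly. Lemma~\ref{lem:hardness-properties}(3) provides exactly this, and the construction is arranged so that the lemma applies verbatim.
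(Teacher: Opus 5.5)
Your proposal is correct and follows the paper's proof essentially step for step. It uses the same reduction (a transitive $A$ of size $3\cdot 2^p-1$ with source $a^*$, a copy of the input as $B$, and the single $A$-to-$B$ arc $(a^*,b^*)$), the same four-quarter seeding for the forward direction, and the same appeal to properties (1) and (3) of Lemma~\ref{lem:hardness-properties} for the converse. You also make explicit two details the paper leaves implicit: that the sfas-v number is exactly $1$, and that the lemma's $\log n$ should be read as $\log|\nin(v^*)|$.
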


\begin{proof}
    We will show a polynomial-time reduction from TFP to TFP with subset FAS number of $v^*$ that equals to 1.
    Specifically, consider a TFP instance $(D, v^*)$ with $n = |D|$, 
    {we construct an equivalent instance $(D' = (\{v'\}\cup A\cup B, E'), v')$ as follows.
    
    \begin{itemize}
        \item $D'[A]$ is an acyclic tournament with $3n - 1$ vertices where $a^*\in A$ is the source vertex in $D'[A]$.
        \item $D'[B]$ is a copy of $D$ where $b^*\in B$ is the copy of $v^*$. 
        \item For each $a\in A$, there is  an arc $(v', a)$ in $E'$.
        For each $b\in B$, there is an arc $(b, v')$ in $E'$.
        \item $(a^*, b^*) \in E'$ and for  each vertex $a\in A$ and each vertex $b\in B$ expect the case that $a = a^*$ and $b = b^*$, there is an arc $(b, a)\in E'$.
    \end{itemize}
    }

    Clearly, $\nout(v') = A$ and $\nin(v') = B$, and $(D', v')$ is a \res{} instance {with respect to $a^*$ and $b^*$}.

    Now we prove both directions of the equivalence of $(D, v^*)$ and $(D', v')$.

    \textbf{Forward direction.} Assume that there exists a winning seeding $\sigma$ for $v^*$ in tournament $D$ (also for $b^*$ in tournament $D'[B]$) and let its corresponding match set sequence be $\mathcal{M}^* =\{M_1,\dots, M_{\log n}\}$.

    Let $(A_1, A_2, A_3)$ be a partition of $D'[A]$ such that $|A_1|=|A_2|=n$, $|A_3|=n-1$, and $a^*\in A_1$. So {$|A_1| = |A_2| = |A_3\cup\{v^*\}| = |B| = n$}.
    
    {Now we construct four seedings for $D'[A_1]$, $D'[A_2]$, $D'[A_3\cup \{v^*\}]$ and $D'[B]$ respectively, and then obtain a winning seeding for $v'$ in $D'$ based on these four seedings.}

    {
    \begin{enumerate}
        \item Let $\sigma_1$ be an arbitrary seeding in tournament $D'[A_1]$. Then $a^*$ must be the winner of $D'[A_1]$ under $\sigma_1$
        since $a^*$ is a source vertex in $D'[A]$.
        \item Let $\sigma_2$ be an arbitrary seeding in tournament $D'[A_2]$ with winner $a'$.
        \item Let $\sigma_3$ be an arbitrary seeding in tournament $D'[A_3\cup \{v'\}]$. Then $v'$ must be the winner of $D'[A_3\cup \{v'\}]$ under $\sigma_3$
        since for each $a\in A$, there is an arc $(v', a)\in E(D')$.
        \item Recall that $\sigma$ is a winning seeding for $b^*$ in tournament $D'[B]$.
    \end{enumerate}
    For $\sigma_i$ $(i = 1, 2, 3)$, the corresponding match set sequence is defined as $\mathcal{M}(\sigma_i) =\{M_1^i,\dots, M_{\log n}^i\}$.
    Recall that the corresponding match set sequence of $\sigma$ is $\mathcal{M}^* =\{M_1,\dots, M_{\log n}\}$.
    }

    The match set sequence $\mathcal{M}'$ is then defined as:
    
    \[
        \mathcal{M}' = \{M_1'
        ,M_2',\dots, M_{\log n}',
        \{(v', a'), (a^*, b^*)\}, \{(v', a^*)\}\},
    \]
    {where $M_r' = M_r\cup \bigcup_{i\in \{1,2,3\}}M_r^i$ for each $r\in [\log n]$.}
     $\mathcal{M}'$ is obtained by unioning the match set sequences of $B$, $A_1$, $A_2$ and $A_3\cup \{v'\}$, and adding two new rounds $\{(v', a'), (a^*, b^*)\}, \{(v', a^*)\}$, which is a valid match set sequence for $V(D')$.

    Let $\sigma'$ be a corresponding seeding for $\mathcal{M}'$, note that $\sigma'$ ensures that $v'$ wins.
    Thus, there exists a winning seeding for $v'$ in tournament $D'$.
    
    \textbf{Reverse direction.} Assume that there exists a winning seeding $\sigma'$ for $v'$ in tournament $D'$ and let its corresponding match set sequence be 
    \[
        \mathcal{M}' =\{M_1',\dots, M_{(\log n) + 2}'\}.
    \]

    By the third property of Lemma~\ref{lem:hardness-properties}, there exists a match set sequence $\mathcal{M} = \{M_1,\dots, M_{\log n}\}$ such that for each round $r\in [\log n]$, $M_r\subseteq M_r'$ and $\mathcal{M}$ is valid for $\nin(v')$. 
    Furthermore, by the first property of Lemma~\ref{lem:hardness-properties}, the corresponding seeding $\sigma$ for $\mathcal{M}$ ensures that $b^*$ is the winner.
    Since $B$ is the copy of $D$ and $b^*$ is the copy of $v^*$, we have that $\sigma$ maintains that $v^*$ wins.
    Thus, $(D, v^*)$ is a yes-instance.
\end{proof}

Clearly, the construction ensures that \sfvsvnum{} equals 1. 
We have the following corollary.

\begin{corollary}\label{coro:hardness-sfvs-v}
    For an input tournament $D$ and a favorite player $v^*\in V(D)$, TFP is $\NP$-hard even when the size of subset feedback vertex set of $D$ with terminal set $\{v^*\}$ is 1.
\end{corollary}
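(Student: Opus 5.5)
The plan is to reuse the reduction from the proof of Theorem~\ref{thm:hardness} verbatim: from a TFP instance $(D,v^*)$ we build the \res{} instance $(D',v')$ with $\nout(v')=A$ (acyclic, $3n-1$ vertices, source $a^*$) and $\nin(v')=B$ (a copy of $D$, with $b^*$ the copy of $v^*$). Since the reduction is polynomial and the equivalence of $(D,v^*)$ and $(D',v')$ was already established there, it only remains to check that the subset feedback vertex set number of $D'$ with terminal set $\{v'\}$ is exactly $1$. NP-hardness of the restricted problem then follows directly from the NP-hardness of TFP~\cite{aziz2014fixing}.

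For the upper bound I will show that $X=\{b^*\}$ is a subset feedback vertex set. Consider any cycle through $v'$ in $D'$. It leaves $v'$ along an arc $(v',a)$ with $a\in A$ and returns along an arc $(b,v')$ with $b\in B$, because every out-neighbor of $v'$ lies in $A$ and every in-neighbor lies in $B$. The part of the cycle strictly between these arcs avoids $v'$, so it must contain an arc going from $A$ to $B$. The only such arc in $E'$ is $(a^*,b^*)$, since every other pair is oriented $(b,a)$. Hence every cycle through $v'$ passes through $b^*$, and deleting $b^*$ (which is not $v'$) destroys all of them.

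For the lower bound, $v'\to a^*\to b^*\to v'$ is a cycle through $v'$, so the number is at least $1$ and the empty set does not suffice.

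The same argument also explains why the sfas-v number is $1$ in Theorem~\ref{thm:hardness}: reversing $(a^*,b^*)$ removes the only $A\to B$ arc.

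I expect no real obstacle here. The only step needing care is the claim that $(a^*,b^*)$ is the unique arc from $A$ to $B$, and this is immediate from the last bullet of the construction.
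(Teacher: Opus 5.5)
Your proposal is correct and follows the paper's proof: the paper reuses the reduction from Theorem~\ref{thm:hardness} unchanged and simply asserts that the constructed instance has sfvs-v number $1$. Your observation that $(a^*,b^*)$ is the only arc from $A$ to $B$ fills in the step the paper calls ``clear'': it forces every cycle through $v'$ to contain $b^*$, while the cycle $v'\to a^*\to b^*\to v'$ shows the value is at least $1$.
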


We further consider two larger parameters: \sfasinnum{} and \sfasoutnum{}.
Note that both are not smaller than \sfasvnum{} since each cycle traversing $v^*$ must intersect both $\nin(v^*)$ and $\nout(v^*)$.
Nevertheless, we will show that TFP is $\NP$-hard even when \sfasinnum{} equals 1 or \sfasoutnum{} equals 1.
Notably, the reduction established in Theorem~\ref{thm:hardness} already yields the result for \sfasoutnum{}, as the construction ensures that $D'[\nout(v^*)]$ is an acyclic tournament.
We directly obtain the following corollary.

\begin{corollary}\label{coro:hardness-sfas-in}
    For an input tournament $D$ and a favorite player $v^*\in V(D)$, TFP is $\NP$-hard even when the size of subset feedback arc set of $D$ with terminal set $\nout(v^*)$ is 1.
\end{corollary}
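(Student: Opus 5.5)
Corollary~\ref{coro:hardness-sfas-in} follows by reusing the reduction from the proof of Theorem~\ref{thm:hardness} unchanged and verifying the parameter bound. Given a TFP instance $(D,v^*)$, we construct the special instance $(D',v')$ with respect to $a^*$ and $b^*$ exactly as there. That proof already establishes that $(D,v^*)$ is a yes-instance if and only if $(D',v')$ is one, and the construction clearly takes polynomial time. So the only remaining task is to show that the subset feedback arc set number of $D'$ with terminal set $\nout(v')=A$ is at most $1$.

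\textbf{Step 1: a cycle through $A$ must use $(a^*,b^*)$.} Take any cycle $C$ in $D'$ containing a vertex of $A$. Because $D'[A]$ is acyclic, $C$ must leave $A$ at some point via an arc $(a,x)$ with $a\in A$ and $x\notin A$. By construction, the arcs incident to $A$ point into $A$ in every case except one: $v'$ beats every $a\in A$, and every $b\in B$ beats every $a\in A$ apart from the single pair where $a^*$ beats $b^*$. Hence the only arc leaving $A$ is $(a^*,b^*)$, and every such cycle uses it.

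\textbf{Step 2: reversing that arc suffices.} Reverse $(a^*,b^*)$ to obtain $D''$. In $D''$ there is no arc leaving $A$, so no cycle can pass through $A$. Thus $\{(a^*,b^*)\}$ is a subset feedback arc set for terminal set $A$. The number cannot be $0$: since $D$ has at least two vertices, $b^*$ has an in-neighbor $b\in B$, giving the cycle $a^*\to b^*\to v'$ together with $v'\to a^*$, or alternatively $a^*\to b^*\to b\to a^*$. Either way the number is exactly $1$.

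Steps 1 and 2 together give the corollary. The argument also shows that $D'[\nout(v')]$ is acyclic, as the statement remarks.

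The only real obstacle is confirming the arc orientations between $A$ and $\{v'\}\cup B$, and these are fixed explicitly in the construction. There is one subtlety in the statement to settle first. The corollary is labelled ``sfas-in'' but its text says terminal set $\nout(v^*)$. I will prove it as written, for terminal set $\nout(v')=A$. The analogous result for $\nin$ would instead need a symmetric construction, with $D'[B]$ acyclic, which this reduction does not provide.
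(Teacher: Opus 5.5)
Your proposal is correct and follows the paper's route: reuse the reduction of Theorem~\ref{thm:hardness} unchanged and check the parameter. Your Step~1 supplies a point the paper leaves implicit, since acyclicity of $D'[A]$ alone does not bound the parameter without also noting that $(a^*,b^*)$ is the only arc leaving $A$. One small slip: the alternative cycle $a^*\to b^*\to b\to a^*$ would need $b$ to be an out-neighbor of $b^*$, not an in-neighbor. This does no harm, because the 3-cycle $v'\to a^*\to b^*\to v'$ always exists and already rules out the value $0$.
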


By modifying the construction used in Theorem~\ref{thm:hardness}, we have the following theorem.

\begin{theorem}\label{thm:hardness-2}
    For an input tournament $D$ and a favorite player $v^*\in V(D)$, TFP is $\NP$-hard even when \sfasinnum{} equals 1.
\end{theorem}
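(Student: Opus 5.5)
The plan is to mirror the reduction of Theorem~\ref{thm:hardness}: the hard copy of $D$ moves into the out-neighborhood of the new favorite, and the in-neighborhood is made almost trivially acyclic. Given a TFP instance $(D,v^*)$ with $n=|V(D)|=2^p$, I would build $D'$ on $\{v'\}\cup A\cup B$ as follows.
\begin{itemize}
\item $B$ is an acyclic tournament on $n$ vertices whose source is $b^*$.
\item $A=A_D\cup A_R$, where $D'[A_D]$ is a copy of $D$ with $a^*$ the copy of $v^*$, and $A_R$ is any set of $2n-1$ further vertices (say, acyclic among themselves).
\item Every vertex of $A_R$ beats every vertex of $A_D$.
\item $v'$ beats all of $A$, and all of $B$ beats $v'$.
\item Between $A$ and $B$ we put $(a^*,b^*)$, and $(b,a)$ for every other pair $a\in A$, $b\in B$.
\end{itemize}
Then $\nout(v')=A$ with $|A|=3n-1$ and $\nin(v')=B$ with $|B|=n$, so $(D',v')$ is \res{} with respect to $a^*,b^*$, and Lemma~\ref{lem:hardness-properties} applies.

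First I check that \sfasinnum{} is exactly $1$. If we reverse $(a^*,b^*)$, then every vertex of $B$ beats every vertex of $A\cup\{v'\}$ and $D'[B]$ is acyclic. Hence no cycle can pass through $B$, since a cycle could never leave $B$ and return. The value is nonzero because $v'\to a^*\to b^*\to v'$ is a cycle through $\nin(v')$.

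For the forward direction, take a winning seeding of $D$ for $v^*$ and copy it onto $A_D$, so that $a^*$ wins its bracket of size $n$. Choose any seeding of $B$; $b^*$ wins it because it is the source. Split $A_R$ into a set of $n-1$ vertices, which goes with $v'$ and $v'$ beats them all, and a set of $n$ vertices with some winner $a'$. The last two rounds are then $\{(v',a'),(a^*,b^*)\}$ and $\{(v',a^*)\}$. This gives a valid match set sequence exactly as in Theorem~\ref{thm:hardness}.

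The reverse direction is the main step. Lemma~\ref{lem:hardness-properties} tells us that after round $\log n$ the survivors are $v'$, $b^*$, $a^*$ and one more vertex of $A$. It also tells us that in every round up to $\log n$, all matches lie inside $A\cup\{v'\}$ or inside $B$. Consider the subtree of size $n$ (rooted at depth $\log n$) that $a^*$ wins. It consists only of vertices of $A$, and $v'$ is not in it because $v'$ meets $a^*$ only in the final. Suppose this subtree contained a vertex of $A_R$. Vertices of $A_R$ can lose only to $v'$, to $B$, or to other vertices of $A_R$, so the subtree's winner would lie in $A_R$, not at $a^*$. Hence the subtree is exactly $A_D$, because $|A_D|=n$. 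Restricting the seeding to $A_D$ therefore gives a winning seeding for $a^*$ in $D'[A_D]\cong D$, i.e.\ for $v^*$ in $D$.

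The delicate point is the claim that $a^*$'s subtree is disjoint from $A_R$. I would argue it by induction on the rounds: in any sub-bracket made only of $A$-vertices that contains some vertex of $A_R$, a vertex of $A_R$ survives every round, since it can only be eliminated by another vertex of $A_R$.
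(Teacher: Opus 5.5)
Your proposal is correct. Its construction and forward direction are the same as the paper's: your $A_R$ and $A_D$ are the paper's $A'$ and $A$.

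The difference is in how the reverse direction isolates the copy of $D$.

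\textbf{The paper's route.} It runs a halving count on the surviving copy-vertices $C_r(\sigma')\cap A_D$. Every vertex of $A_D$ beats only vertices of $A_D\cup\{b^*\}$. By property 1 of Lemma~\ref{lem:hardness-properties}, $b^*$ survives until round $\log n+1$. So each surviving $A_D$-player must have eliminated another $A_D$-player, which gives $|C_r(\sigma')\cap A_D|\le\lfloor |C_{r-1}(\sigma')\cap A_D|/2\rfloor$. By property 2, $a^*$ survives round $\log n$, and this forces exact halving in every round. Hence $A_D$ only ever plays internally, and the restriction is a valid match set sequence won by $a^*$.

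\textbf{Your route.} You use property 3 to confine $a^*$'s size-$n$ sub-bracket to $A$, with $v'$ excluded. You then use the dominance of $A_R$ over $A_D$ to show that any $A_R$-vertex in that sub-bracket would win it, so the sub-bracket is exactly $A_D$.

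\textbf{Comparison.} The paper's count needs only properties 1 and 2. Yours needs property 3, but reads the conclusion directly off the bracket structure instead of redoing the counting. Both arguments depend on the arcs from $A_R$ to $A_D$.

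You state those arcs explicitly. The paper's text never specifies the arcs between $A'$ and $A$; only its figure shows them, even though its proof relies on the copy-vertices beating nothing outside $A\cup\{b^*\}$. You also verify that the sfas-in number is exactly $1$, which the paper merely asserts.
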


\begin{proof}
    We will show a polynomial-time reduction from TFP to TFP with \sfasinnum{} that equals to 1.
    Specifically, consider a TFP instance $(D, v^*)$ with $n = |D|$, 
    {we construct an equivalent instance $(D' = (\{v'\}\cup A'\cup A\cup B, E'), v')$ as follows.
    
    \begin{itemize}
        \item $D'[A']$ is an acyclic tournament with $2n - 1$ vertices.
        \item $D'[B]$ is an acyclic tournament with $n$ vertices where $b^*\in B$ is the source vertex in $D'[B]$.
        \item $D'[A]$ is a copy of $D$ where $a^*\in B$ is a copy of $v^*$. 
        \item For each $a\in A'\cup A$, there is an arc $(v', a)$ in $E'$.
        For each $b\in B$,there is an arc $(b, v')$ in $E'$.
        \item $(a^*, b^*) \in E'$ and for each vertex $a\in A'\cup A$ and each vertex $b\in B$ expect the case that $a = a^*$ and $b = b^*$, there is an arc $(b, a)\in E'$.
    \end{itemize}
    }
    See Fig.~\ref{fig:thm:hardness-2-app} for an illustration. 
    Clearly, $\nout[v'] = A\cup A'$ and $\nin[v'] = B$, and $(D', v')$ is a \res{} instance {with respect to $a^*$ and $b^*$}.
    
    Now we prove both directions of the equivalence of $(D, v^*)$ and $(D', v')$.

    \begin{figure}[h]
        \centering
        \begin{tikzpicture}[
            thick,
            scale=0.90, transform shape,
            dot/.style={circle, draw, fill, minimum size=3pt, inner sep=0pt},
            every node/.style={circle, draw, inner sep=2pt, minimum size=10pt},
            texts/.style={fill=none, draw=none, rectangle},
            arcs/.style={-{Stealth[length=6pt, inset=3pt, round, scale width=1.4]}, black, shorten <=2pt, shorten >=2pt, thick},
            every fit/.style={rectangle, dashed, rounded corners=5pt, thick, draw=black, inner sep=5pt}
        ]

            \node[dot] at (50pt, 50pt) (v-star) {};
            \node[texts, anchor=south west] at (v-star.north east) {$v^*$};

            \node[dot] at (10pt, 30pt) (b1) {};                
            \node[dot] at (10pt, 15pt) (b2) {};               
            \node[texts] at (10pt, 0pt) {$\vdots$};            
            \node[dot] at (10pt, -20pt) (b3) {};

            \node[dot] at (10pt, -40pt) (a1) {};                
            \node[texts] at (10pt, -52pt) {$\vdots$};            
            \node[dot] at (10pt, -70pt) (a2) {};

            \node[dot] at (90pt, -40pt) (b11) {};                
            \node[texts] at (90pt, -52pt) {$\vdots$};           
            \node[dot] at (90pt, -70pt) (b22) {};

            \node[texts, anchor=east, xshift=-8pt] at (a1.west) {$a^*$};

            \node[texts, anchor=west, xshift=8pt] at (b11.east) {$b^*$};

            \node[fit=(b1) (b3)] (A-1) {};                            
            \node[texts, anchor=south west, xshift = +5pt] at (a1.north east) {$A=D$}; 
            \node[fit=(b11) (b22)] (B) {};                          
            \node[texts, anchor=south, yshift=-18pt] (nin) at (B.south) {$\nin(v^*)$}; 
            \node[texts, anchor=south east, xshift = -5pt] at (b11.north west) {$B$}; 
            \node[fit=(a1) (a2)] (A) {};                          
            \node[texts, anchor=south, yshift=-18pt] (nout) at (A.south) {$\nout(v^*)$}; 
            \node[texts, anchor=south west, xshift = +5pt] at (b1.north east) {$A'$};

            \draw[arcs] (v-star.west) to[out=180, in=90, looseness=1] (A-1.north);
            
            \draw[arcs] (B.north) to[out=90, in=0, looseness=1] (v-star.east);
            
            \draw[arcs] (v-star.north west) to[out=150, in=180, looseness=1.5] (A.west);

            \draw[arcs] (B.north) to[out=90, in=0, looseness=1] (A-1.east);
            \draw[arcs] (A-1.west) to[out=180, in=180, looseness=1] (A.west);

            \draw[arcs, red] (a1) to[out=0, in=180, looseness=1] (b11);
            \draw[arcs] (b11) -- (a2);

            \draw[arcs] (b22) -- (a1);
            \draw[arcs] (b22) -- (a2);

            \draw[arcs] (b11) to[out=285, in=75, looseness=1] (b22);

        \end{tikzpicture}
        \caption{An illustration for the constructed tournament $D'$.}
        \label{fig:thm:hardness-2-app}
    \end{figure}

    \textbf{Forward direction.} Assume that there exists a winning seeding $\sigma$ for $v^*$ in tournament $D$ (also for $a^*$ in tournament $A$) and let its corresponding match set sequence be $\mathcal{M}^* =\{M_1,\dots, M_{\log n}\}$.

    Let $(A_1', A_2')$ be a partition of $A'$ such that $|A_1'| = n$ and $|A_2'| = n - 1$. So $|A_1'| = |A_2'\cup \{v^*\}| = |A| = |B| = n$.
    Now we construct four seedings for $D'[A_1]$, $D'[A_2\cup\{v'\}]$, $D'[A]$ and $D'[B]$ respectively, and then obtain a winning seeding for $v'$ in $D'$ based on these four seedings.

    {
    \begin{enumerate}
        \item Let $\sigma_1$ be an arbitrary seeding in tournament $D'[A_1']$ with winner $a'$.
        
        \item Let $\sigma_2$ be an arbitrary seeding in tournament $D'[A_2'\cup \{v'\}]$.
        Then $v'$ must be the winner of $D'[A_2\cup \{v'\}]$ under $\sigma_2$ since for each $a\in A'\cup A$, there is an arc $(v', a)\in E(D')$.
        \item Let $\sigma_3$ be an arbitrary seeding in tournament $D'[B]$.
        Then $b^*$ must be the winner of $D'[B]$ under $\sigma_3$
        since $b^*$ is a source vertex in $D'[B]$.
        
        \item Recall that $\sigma$ is a winning seeding for $a^*$ in tournament $D'[A]$.
    \end{enumerate}
    For $\sigma_i$ $(i = 1, 2, 3)$, the corresponding match set sequence is defined as $\mathcal{M}(\sigma_i) =\{M_1^i,\dots, M_{\log n}^i\}$.
    Recall that the corresponding match set sequence of $\sigma$ is $\mathcal{M}^* =\{M_1,\dots, M_{\log n}\}$.
    }

    The match set sequence $\mathcal{M}'$ is then defined as: 
    
    \[
        \mathcal{M}' = \{M_1'
        ,M_2',\dots, M_{\log n}',
        \{(v', a'), (a^*, b^*)\}, \{(v', a^*)\}\},
    \]
    {where $M_r' = M_r\cup \bigcup_{i\in \{1,2,3\}}M_r^i$ for each $r\in [\log n]$.
    $\mathcal{M}'$ is obtained by unioning the match set sequences of $A_1'$, $A_2'\cup \{v'\}$, $A$ and $B$, and adding two new rounds $\{(v', a'), (a^*, b^*)\}, \{(v', a^*)\}$, which is a valid match set sequence for $V(D')$.
    }

    Let $\sigma'$ be a corresponding seeding for $\mathcal{M}'$, note that $\sigma'$ ensures that $v'$ wins.
    Thus, there exists a winning seeding for $v'$ in tournament $D'$.

    \textbf{Reverse direction.} Assume that there exists a winning seeding $\sigma'$ for $v'$ in tournament $D'$ and let its corresponding match set sequence be 
    \[
        \mathcal{M}' =\{M_1',\dots, M_{(\log n) + 2}'\}.
    \]

    For round $r\in [\log n]$, let $A_{r - 1} = C_{r - 1}(\sigma')\cap A$ be the players in $A$ remain after round $r - 1$.
    By the definition of $D'$, we know that for any player $a\in A$, $a$ can only beat players in $A\cup b^*$ and $b^*$ remains after round $\log n$. 
    Thus, there are at least $\lceil\frac{1}{2}|A_{r - 1}|\rceil$ vertices in $A_{r - 1}$ are eliminated,
    which means that 
    \[
        |A_{r}| \leq |A_{r - 1}| - \lceil \frac{1}{2}|A_{r - 1}|\rceil = \lfloor \frac{1}{2}|A_{r - 1}|\rfloor.
    \]

    Now we assume there exists a match $(x, a)\in M_{r_0}'$ for some $r_0\in [\log n]$ such that $a\in A$ and $b\in V(D')\setminus A$.
    Let $r$ be the first round that there exists such match.
    We have that $|B_{r - 1}| = |A|/ 2^{r - 1}$ is an even number.
    Since each player in $A$ can only beat $A\cup b^*$ and $b^*$ remains after round $\log n$, we have that 
    \[
        |A_r| < \frac{1}{2}|A_{r - 1}|.
    \]
    Thus, we have that 
    \[
        |A_{\log n}| < \frac{1}{2^{\log n}}|\nin(v^*)| = 1,
    \]
    which means that $a^*$ is eliminated after round $\log n$.
    However, by the second property of Lemma~\ref{lem:hardness-properties}, We know that $a^*\in V(M_{\log (n + 1)}')$, a contradiction.
    
    Thus, there exists a match set sequence $\mathcal{M} = \{M_1,\dots, M_{\log n}\}$ such that for each round $r\in [\log n]$, $M_r\subseteq M_r'$ and $\mathcal{M}$ is valid for $A$.
    Let $\sigma$ be a corresponding seeding for $\mathcal{M}$.
    Since $a^*$ remains after round $\log n$, we further know that the winner of $\sigma$ is $a^*$.
    Since $A$ is the copy of $D$ and $a^*$ is the copy of $v^*$, we have that $\sigma$ maintains that $v^*$ wins.
\end{proof}

\section{Some Tracatble Cases}

So far, we have presented some hardness results for the TFP problem. These results can also be understood in the following way. We can partition the tournament, apart from vertex \( v^* \), into three components:
(i) \( D[N_{\text{in}}(v^*)] \),
(ii) \( D[N_{\text{out}}(v^*)] \), and
(iii) the arcs between \( N_{\text{in}}(v^*) \) and \( N_{\text{out}}(v^*) \).

When the value of the sfas-in number (resp. sfas-out number) is small, it indicates that the structure of component (i) (resp. component (ii)) becomes relatively simple. When the sfas-v number is small, it implies that component (iii) is relatively simple. Our previous studies essentially reveal whether the problem becomes easier to solve when certain components among these three have simple structures.
\textbf{Theorem~\ref{thm:hardness}} shows that TFP remains \NP-hard even when components (i) and (iii) are simple in structure.
\textbf{Theorem~\ref{thm:hardness-2}} shows that TFP remains \NP-hard even when components (ii) and (iii) are simple in structure.

A natural remaining question is whether the problem becomes tractable when components (i) and (ii) are simple in structure. In the following, we address this question by presenting tractable results.

First, we show that TFP is $\FPT$ when parameterized by \sfasvnum{}. Our approach leverages the following known result about the fvs number.

\begin{lemma}[\cite{zehavi2023tournament}]\label{lem:fvsfpt}
    TFP is solvable in time $2^{\mathcal{O}(t\log t)}\cdot n^{\mathcal{O}(1)}$ where $n$ and $t$ are the number of nodes and the fvs number, respectively.
\end{lemma}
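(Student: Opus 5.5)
This lemma is imported from Zehavi~\cite{zehavi2023tournament}, and in the paper it is used as a black box. If I had to reprove it, I would follow the structure of the fvs-based algorithm, built on the standard reformulation: $v^*$ wins under some seeding iff $D$ contains a spanning \emph{binomial arborescence} rooted at $v^*$. In such an arborescence every arc $(u,w)$ points from the winner to the loser of a match, and the children of each vertex are ordered by round.

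\textbf{Step 1: compute a small fvs.} A tournament is acyclic iff it contains no directed triangle. Hence a minimum feedback vertex set $S$ with $|S|=t$ can be found in $3^t n^{\mathcal{O}(1)}$ time by branching on triangles. The remainder $D-S$ is then a transitive tournament with a linear order $u_1,\dots,u_{n-t}$, where $u_i$ beats $u_j$ whenever $i<j$.

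\textbf{Step 2: guess the skeleton.} I would guess how the vertices of $S\cup\{v^*\}$ sit in the binomial arborescence. This means guessing, for each $s\in S$, the round in which $s$ is eliminated and whether its eliminator lies in $S$ or in $D-S$. I would also guess the ancestor--descendant relations among $S$, contracted to a tree on $\mathcal{O}(t)$ branching vertices. The skeleton is an ordered tree of size $\mathcal{O}(t)$ with labels drawn from $[\log n]$, so taken naively this costs $2^{\mathcal{O}(t\log t)}(\log n)^{\mathcal{O}(t)}$ guesses. The factor $(\log n)^{t}$ is bounded by $2^{\mathcal{O}(t\log t)}n^{\mathcal{O}(1)}$, since $(\log n)^t\le t^{2t}$ or $\le n$. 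A constant number of \emph{interface} vertices of $D-S$ per skeleton edge would be guessed only up to their relative position in the transitive order.

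\textbf{Step 3: fill in the transitive part.} Once the skeleton is fixed, each remaining hole is a binomial subtree of prescribed size $2^j$ hanging below a specified winner. Its members must come from $D-S$, plus the specified $S$-vertices, subject to beating constraints. Inside the transitive part, any set of $2^j$ vertices forms a binomial subtree rooted at its earliest vertex. So I would reduce filling the holes to a bipartite assignment problem: holes and their sizes on one side, vertices of $D-S$ on the other. The constraints are (i) each hole's root beats its designated parent and the $S$-vertices it must eliminate, and (ii) the non-root vertices come later in the order than that root, or are beaten by an $S$-vertex in the hole. I would decide this either by a greedy sweep along the linear order or by a flow/Hall-condition check, in polynomial time.

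\textbf{Main obstacle.} The step I expect to be hardest is Step 3, together with choosing the skeleton in Step 2 so that Step 3 really does become a polynomial-time assignment. In particular, the $S$-vertices can beat transitive vertices in an arbitrary, non-monotone pattern, and this breaks naive greedy exchange arguments. To handle it, I would partition $D-S$ into at most $2^t$ ``types'' according to the vertex's in/out-relation to $S$. An exchange argument should then show that, within a type and a contiguous interval of the order, vertices are interchangeable. With interchangeability, the assignment becomes a flow problem whose number of commodities is bounded in $t$, which gives the stated $2^{\mathcal{O}(t\log t)}\cdot n^{\mathcal{O}(1)}$ bound.
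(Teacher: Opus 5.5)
Your opening sentence is all the paper does. The lemma is quoted from \cite{zehavi2023tournament} without proof and is used only as a black box in the two FPT theorems that follow it, so within this paper the citation is the complete justification. The reconstruction you then sketch is not a proof, and its gaps sit where the substance of the cited result lies. Step 1 and your estimate for $(\log n)^t$ are fine.

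\textbf{Step 2 does not keep the guessing FPT.} Contracting the root-to-$S$ paths to their $\mathcal{O}(t)$ branching vertices hides the vertices of $D-S$ lying on those paths. Elimination rounds strictly increase going up a binomial arborescence, so one contracted skeleton edge can carry up to about $\log n$ such vertices, whose rounds form an arbitrary increasing sequence in $[\log n]$. These rounds matter: a vertex eliminated in round $r$ has exactly one child eliminated in each of rounds $1,\dots,r-1$. The rounds therefore fix the sizes of all holes hanging off the path, and Step 3 is not well posed without them. Guessing them costs up to $2^{\log n}=n$ choices per edge, i.e.\ $n^{\mathcal{O}(t)}$ overall, which is only an XP bound. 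Your claim that a constant number of ``interface vertices'' per edge suffices is exactly the normalization lemma that would be needed, and you give no argument for it. Moreover, guessing those vertices ``only up to relative position'' leaves their identities to Step 3, where they must still satisfy arbitrary beating constraints with respect to $S$.

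\textbf{Step 3 is asserted, not shown.} Constraint (ii) conflicts with Step 2. If the skeleton already places every vertex of $S$, holes contain no $S$-vertex. If holes may contain $S$-vertices, a hole is no longer transitive and the ``root is the earliest vertex'' fact fails. The per-vertex condition ``beaten by an $S$-vertex in the hole'' also ignores that this $S$-vertex must itself survive to the right round and be eliminated inside the hole. More importantly, the exchange argument in your last paragraph is only conjectured (``should''). The tool you name also does not give polynomial time: integral multicommodity flow is NP-hard already for two commodities. With up to $2^t$ types, any procedure exponential in the number of types (per-type count guessing, or an ILP with that many variables) overshoots $2^{\mathcal{O}(t\log t)}$. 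So neither the correctness of Step 3 nor the claimed running time follows. To turn the sketch into a proof, you would need a structural result showing that, after a guess of size $2^{\mathcal{O}(t\log t)}$, the interaction between $S$ and the transitive part reduces to a polynomial-time problem.
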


By applying Lemma~\ref{lem:fvsfpt}, we establish the following result.

\begin{theorem}
    Given an instance $(D, v^*)$ of TFP, the problem can be solved in time $2^{\mathcal{O}(s\log s)}\cdot n^{\mathcal{O}(1)}$, where $n=|V(D)|$ and $s$ is the sum of \sfasinnum{} and \sfasoutnum{}.
\end{theorem}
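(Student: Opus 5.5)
The plan is to reduce to Lemma~\ref{lem:fvsfpt} by showing that the fvs number of $D$ is at most $2s$. Since $2s\log(2s)=\mathcal{O}(s\log s)$, the claimed running time then follows at once.

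Let $k_{\mathrm{in}}$ and $k_{\mathrm{out}}$ be \sfasinnum{} and \sfasoutnum{}, so $s=k_{\mathrm{in}}+k_{\mathrm{out}}$. Fix a subset feedback arc set $F_{\mathrm{in}}$ for terminal set $\nin(v^*)$ with $|F_{\mathrm{in}}|=k_{\mathrm{in}}$, and one $F_{\mathrm{out}}$ for terminal set $\nout(v^*)$ with $|F_{\mathrm{out}}|=k_{\mathrm{out}}$. Let $X$ be the set of all endpoints of arcs in $F_{\mathrm{in}}\cup F_{\mathrm{out}}$. Then $|X|\le 2s$, and I claim that $D-X$ is acyclic.

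\emph{Key observation.} Any cycle $C$ of $D-X$ uses no arc of $F_{\mathrm{in}}$, since every such arc has an endpoint in $X$. So $C$ is also a cycle of the digraph obtained from $D$ by reversing $F_{\mathrm{in}}$, because all of its arcs are unchanged there. By the choice of $F_{\mathrm{in}}$, $C$ therefore contains no vertex of $\nin(v^*)$. The same argument with $F_{\mathrm{out}}$ shows that $C$ contains no vertex of $\nout(v^*)$.

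Since $D$ is a tournament, $V(D)=\{v^*\}\cup\nin(v^*)\cup\nout(v^*)$. Hence $C$ could only consist of $v^*$ alone, which is impossible. So $D-X$ is acyclic, and the fvs number of $D$ is at most $2s$.

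Finally we invoke Lemma~\ref{lem:fvsfpt} with $t\le 2s$. The step needing the most care is whether that algorithm requires a feedback vertex set as part of its input. If it does, we obtain one in time $2^{\mathcal{O}(s)}n^{\mathcal{O}(1)}$ by running a standard FPT algorithm for feedback vertex set in tournaments with budget $2s$; for instance, the simple branching on directed triangles gives time $3^{2s}n^{\mathcal{O}(1)}$. This step is dominated by $2^{\mathcal{O}(s\log s)}n^{\mathcal{O}(1)}$, so it does not change the overall bound.

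Note that we never need to compute $F_{\mathrm{in}}$ or $F_{\mathrm{out}}$ themselves. They serve only to certify the bound on the fvs number.
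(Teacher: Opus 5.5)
Your proof is correct and is essentially the paper's argument: both note that a cycle avoiding $F_{\mathrm{in}}$ (resp.\ $F_{\mathrm{out}}$) survives the reversal and so cannot meet $\nin(v^*)$ (resp.\ $\nout(v^*)$), which shows that $F_{\mathrm{in}}\cup F_{\mathrm{out}}$ hits every cycle of $D$, and then both apply Lemma~\ref{lem:fvsfpt}. The differences are minor: the paper gets fvs $\le s$ via fas $\ge$ fvs (one endpoint per arc suffices), whereas you take both endpoints to get $2s$, which gives the same $2^{\mathcal{O}(s\log s)}n^{\mathcal{O}(1)}$ bound; your remark on computing a feedback vertex set, in case the algorithm needs one as input, covers a detail the paper leaves implicit.
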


\begin{proof}
    Let $t$ be the fvs number.
    Let $A_{in}$ be a minimum subset feedback arc set of $D$ with terminal set $\nin(v^*)$.
    Let $A_{out}$ be a minimum subset feedback arc set of $D$ with terminal set $\nin(v^*)$.
    Clearly, $s = |A_{in}|+|A_{out}|$.
    With Lemma~\ref{lem:fvsfpt}, it is sufficient to show that $t \leq s$.
    Since the fas number is not smaller than the fvs number, 
    we will show that $A_{in}\cup A_{out}$ is a feedback arc set of $D$ to complete this proof.

    Consider any cycle $C$ in $D$.
    Clearly $C$ must contain at least one vertex in $\nin(v^*) \cup \nout(v^*)$.
    If $C$ contains one vertex in $\nin(v^*)$, then $C$ contains an arc in $A_{in}$.
    If $C$ contains one vertex in $\nout(v^*)$, then $C$ contains an arc in $A_{out}$.
    Thus, $A_{in}\cup A_{out}$ is a feedback arc set of $D$. The theorem holds.
\end{proof}

Next, we show that if we restrict that the in-neighborhoods and out-neighborhoods of the favourite are both acyclic, then TFP is $\FPT$ with respect to \sfasvnum{}.

\begin{definition}
An instance $(D, v^*)$ is called a \emph{\DAGins{}} if the induced digraphs $D[\nin(v^*)]$ and $D[\nout(v^*)]$ are both acyclic.
\end{definition}

\begin{theorem}\label{thm:sfasfpt}
    Given a \DAGins{} $(D, v^*)$ of TFP, the problem can be solved in time $2^{\mathcal{O}(k\log k)}\cdot n^{\mathcal{O}(1)}$, where $n=|V(D)|$ and $k$ is \sfasvnum{}.
\end{theorem}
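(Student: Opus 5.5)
The strategy is to reduce to Lemma~\ref{lem:fvsfpt}, exactly as in the preceding theorem. We show that for a \DAGins{} $(D, v^*)$ the fvs number $t$ of $D$ is at most $2k+1$, or more generally $O(k)$, where $k$ is \sfasvnum{}. The running time $2^{\mathcal{O}(k\log k)}\cdot n^{\mathcal{O}(1)}$ then follows immediately.

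Let $F$ be a minimum subset feedback arc set with terminal set $\{v^*\}$, so $|F|=k$. Let $X$ be the set of endpoints of arcs of $F$ lying in $\nin(v^*)\cup\nout(v^*)$, so $|X|\le 2k$. We claim that $S = X\cup\{v^*\}$ is a feedback vertex set of $D$.

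Consider any cycle in $D-S$. Since a tournament containing a cycle contains a triangle, and every triangle of $D - S$ would witness a cycle, it suffices to show that $D-S$ has no directed triangle $C=(x,y,z)$. Each of $x,y,z$ lies in $\nin(v^*)$ or $\nout(v^*)$, and by the \DAGins{} assumption the three vertices are not all on the same side. So $C$ has two vertices on one side and one on the other; in particular $C$ contains an arc from $\nout(v^*)$ to $\nin(v^*)$, say $(a,b)$. Then $v^*\to a\to b\to v^*$ is a triangle through $v^*$, so $F$ must contain one of $(v^*,a)$, $(a,b)$, $(b,v^*)$.

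If $F$ contains $(a,b)$, then $a,b\in X$, a contradiction. The remaining cases, where $F$ reverses $(v^*,a)$ or $(b,v^*)$, are the main obstacle, because these arcs have $v^*$ as an endpoint and would put $a$ or $b$ in $X$ only if we also count the non-$v^*$ endpoint. We do count it, since $a$ (respectively $b$) is an endpoint of an $F$-arc inside $\nin(v^*)\cup\nout(v^*)$. Hence $a$ or $b$ lies in $X$ in every case, again a contradiction.

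This argument only needs to be run on one arc of $C$, and every triangle in $D - S$ contains such a cross arc. Therefore $D - S$ is acyclic and $t\le 2k+1$. Invoking Lemma~\ref{lem:fvsfpt} gives time $2^{\mathcal{O}(k\log k)}\cdot n^{\mathcal{O}(1)}$.

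Two routine facts remain to be confirmed: that a tournament with a cycle has a directed triangle (the standard shortcut argument), and that the side analysis correctly handles triangles containing both an in-to-out arc and an out-to-in arc. Because the argument above uses only the out-to-in arc, which every mixed triangle has, the second point is settled. The delicate point I would double-check is that the definition of \sfasvnum{} via arc reversal, rather than deletion, still forces $F$ to hit every triangle through $v^*$. It does, since reversing an arc not on that triangle leaves the triangle intact.
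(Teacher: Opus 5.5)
Your proof is correct and follows the paper's argument. Both take the endpoint set of a minimum subset FAS for $\{v^*\}$, use acyclicity of both neighborhoods to force an arc $(a,b)$ from $\nout(v^*)$ to $\nin(v^*)$ on any surviving cycle, observe that the triangle $v^*ab$ would then survive the reversals, and conclude $t=O(k)$ before invoking Lemma~\ref{lem:fvsfpt}. The differences are cosmetic: you add $v^*$ to the deletion set (bound $2k+1$ rather than $2k$) and restrict to triangles, and your case check on which arc of $v^*ab$ lies in $F$ spells out a step the paper leaves implicit.
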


\begin{proof}
    Let $t$ be the fvs number.
    Let $A$ be a minimum subset feedback arc set of $D$ with terminal set $\{v^*\}$.
    With Lemma~\ref{lem:fvsfpt}, it is sufficient to show that $t \leq 2k$.
    Since $|V(A)| \leq 2|A|$,
    we will show that $V(A)$ is a feedback arc set of of $D$ to complete this proof.

    By contradiction, we assume that there exists a cycle $C$ in {$D - V(A)$. Note that $C$ cannot contain $v^*$ since the removal of $A$ from $D$ should leave no cycle containing $v^*$.}
    Since $\nin(v^*)$ is acyclic, $C$ contains at least one vertex in $\nout(v^*)$.
    Since $\nout(v^*)$ is acyclic, $C$ contains at least one vertex in $\nin(v^*)$.
    We further know that $C$ must contain an arc from one vertex $a\in \nout(v^*)$ to one vertex $b\in \nin (v^*)$.
    However, we know that $\{v^*, a, b\}$ is a cycle in $D$, {which contradicts that
    $A$ is a subset FAS of $D$ with terminal set $\{v^*\}$.}
\end{proof}

\section{Acyclic Neighborhood Structures}
In this section, we investigate several sufficient conditions for a \DAGins{} $(D, v^*)$ that guarantee $(D, v^*)$ is a yes-instance.
We hope these structural results can be helpful to understand \DAGins{}s.

A player $v$ is a \emph{king} if $v$ has distance at most 2 to every other player in the tournament graph.(i.e., for each $u$, $v$ either beats $u$ or
beats another $w$ that beats $u$).
Similarly, a \emph{3-king} is a player that has distance at most 3 to every other player.
A significant research direction involves identifying broad classes of instances that admit efficient solutions.
Prior work has established several sufficient conditions under which a given player can be guaranteed to win \cite{williams2010fixing,stanton2011rigging,kim2015fixing,kim2017can}.
Furthermore, the corresponding winning seedings can be constructed in polynomial time, which implies that the winner is susceptible to manipulation in many practical settings.
A substantial portion of these established conditions focuses on players who are kings.

Consider a \DAGins{} $(D, v^*)$ where $D$ contains a source.
If $v^*$ is the unique source, the instance is trivially a yes-instance;
Othervise it is trivially a no-instance.
Consequently, we focus on the case where there is no source in $D$.
We first establish that in this case, $v^*$ is a 3-king.
\begin{lemma}
    Let $(D, v^*)$ be a \DAGins{} where there is no source in $D$. 
    Then, $v^*$ is a 3-king.
\end{lemma}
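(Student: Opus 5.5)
We have to show that every vertex $u\neq v^*$ is reachable from $v^*$ by a directed path of length at most $3$. Out-neighbours are at distance $1$, so it remains to handle an arbitrary $u\in\nin(v^*)$. The plan uses two ingredients: the acyclicity of $D[\nin(v^*)]$, which yields a source of that subtournament, and the assumption that $D$ has no source at all.

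\textbf{Step 1 (a top vertex of $\nin(v^*)$).} Since $D[\nin(v^*)]$ is an acyclic tournament, it is transitive. Let $s$ be its unique source, so $s$ beats every other vertex of $\nin(v^*)$. In $D$, the vertex $s$ is not a source by assumption. It beats $v^*$, so its in-neighbour is not $v^*$, and it lies outside $\nin(v^*)$. Hence some $a\in\nout(v^*)$ beats $s$.

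\textbf{Step 2 (paths into $\nin(v^*)$).} Let $u\in\nin(v^*)$ be arbitrary.
\begin{itemize}
\item If $u=s$, then $v^*\to a\to s$ is a path of length $2$.
\item If $u\neq s$, then $s\to u$, and $v^*\to a\to s\to u$ is a path of length $3$.
\end{itemize}
In either case $u$ is within distance $3$ of $v^*$. Out-neighbours are at distance $1$, so $v^*$ is a 3-king.

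\textbf{Remarks.} The argument only uses that $D[\nin(v^*)]$ is acyclic together with sourcelessness; acyclicity of $D[\nout(v^*)]$ is not needed. The degenerate case $\nin(v^*)=\emptyset$ is vacuous, because then $v^*$ beats everyone. The only delicate point, and the natural place to slip, is Step 1: an in-neighbour of $s$ cannot come from $\nin(v^*)$, because $s$ is the source there, and it cannot be $v^*$, because $s$ beats $v^*$. So it must lie in $\nout(v^*)$. Once this is observed, no obstacle remains.
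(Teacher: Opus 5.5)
Your proposal is correct and follows essentially the same argument as the paper: take the source of the transitive tournament $D[\nin(v^*)]$, use that it is not a source of $D$ to obtain an in-neighbour in $\nout(v^*)$, and then reach every vertex of $\nin(v^*)$ within three steps. Your Step 1 spells out why that in-neighbour must lie in $\nout(v^*)$, a step the paper states without justification.
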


\begin{proof}
    For each vertex $a$ in $\nout(v^*)$, we know that the distance from $v^*$ to $a$ is 1.
    Let $b^*$ be the source of $D[\nin(v^*)]$.
    Since $b^*$ is not a source of whole tournament $D$, there exists a vertex $a'$ in $\nout(v^*)$ such that $(a', b^*)\in E(D)$.
    Thus, we know that the distance from $v^*$ to $b^*$ is 2.
    Since $b^*$ is the source of $D[\nin(v^*)]$, for each vertex $b\in \nin(v^*)$, we know that the distance from $v^*$ to $b$ is at most 3.
    Thus, $v^*$ is a 3-king.
\end{proof}

The following known result provides a sufficient condition guaranteeing the favorite player $v^*$ is a winner for the case that $v^*$ is a 3-king.

\begin{lemma}[\cite{kim2015fixing}]\label{lem:3king-existence}
    Let $(D, v^*)$ be an instance where $v^*$ is a 3-king. 
    Let $A = \nout(v^*), B = \nout(A)\cap \nin(v^*)$ and $C = \nin(v^*)\setminus B$.
    Then, $(D, v^*)$ is guaranteed to be a yes-instance if the following conditions hold.
    \begin{enumerate}
        \item $|A|\geq |V(D)|/3$.
        \item $\forall b\in B, out(b)\leq out(v^*)$.
        \item There is a perfect matching from $B$ onto $C$.
    \end{enumerate}
    Furthermore, a winning seeding for $v^*$ can be found in polynomial time.
\end{lemma}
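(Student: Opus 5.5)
This is the sufficient condition of Kim and Vassilevska Williams for a 3-king, so the plan is to build an explicit seeding recursively, by induction on $\log |V(D)|$. The player set is split into $A=\nout(v^*)$, $B$ (in-neighbors of $v^*$ beaten by some vertex of $A$) and $C$ (the remaining in-neighbors). Since $v^*$ is a 3-king, every $c\in C$ is beaten by some $b\in B$. The intended strategy in round 1 is:
\begin{itemize}[label={}]
\item (a) pair each $c\in C$ with its matched partner $b\in B$ (condition 3), so that every vertex of $C$ is eliminated;
\item (b) pair $v^*$ with some vertex of $A$;
\item (c) pair further vertices of $B$ with vertices of $A$ that beat them, to eliminate as many $B$-vertices as possible;
\item (d) pair the remaining players among themselves, preferring pairings inside $A$.
\end{itemize}

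The key steps, in order, are as follows. First, I will use a maximum bipartite matching between $A$ and $B$ in the arc direction $A\to B$, found by polynomial-time augmenting paths, to choose step (c). Second, I will show that after round 1 the surviving set $V'$ still satisfies the three hypotheses, with respect to the new sets $A',B',C'$ computed in $D[V']$.
\begin{itemize}[label={}]
\item Condition 2 is used to guarantee that the unmatched $B$-vertices are few relative to $A$. Because $out(b)\le out(v^*)=|A|$ and $b$ beats all of $C$ plus part of $A$, a Hall-type counting argument should show that $B$ can be covered (up to the surplus) by $A$-vertices beating them.
\item Condition 1, $|A|\ge n/3$, gives enough $A$-vertices both to absorb $v^*$'s match and to cover $B$. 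Halving then preserves $|A'|\ge |V'|/3$, because we eliminate $C$ and $B$-vertices preferentially while losing only about half of $A$.
\item The new $C'$ must still be matchable from $B'$ and $v^*$ must remain a 3-king. To ensure this, leftover $B$-vertices that are not eliminated by $A$ should be paired inside $B$ so that the survivors still dominate the survivors of $C$.
\end{itemize}
Third, I will conclude by induction down to two players, where $v^*$ faces an $A$-vertex.

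I expect the main obstacle to be the invariant maintenance, specifically showing that the degree condition 2 and the matching condition 3 persist in the shrunken tournament. The degree bound only gets easier, since $out(v^*)$ halves but so do the relevant out-neighborhoods. The delicate part is arranging the eliminations so that $|A'|\ge |V'|/3$ holds simultaneously with a perfect matching from $B'$ onto $C'$. My first attempt will be to eliminate all of $C$ together with $|C|$ vertices of $B$ via the given matching, then split the rest into balanced groups and use condition 2 in a Hall-violation contradiction.

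Since this lemma is cited from~\cite{kim2015fixing}, in the paper it suffices to invoke that reference. The sketch above is how I would reconstruct the argument; polynomial running time follows because every step is a matching computation.
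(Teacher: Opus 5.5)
The paper gives no proof of this lemma; it imports it from Kim and Vassilevska Williams, so your closing remark matches the paper's treatment. Your reconstruction, however, cannot be completed, and it breaks at the step you flagged: \emph{with the hypotheses as transcribed here, the statement is false.}

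Take $n=8$, $A=\{a_1,a_2,a_3\}$ (ordered transitively), $B=\{b_1,b_2\}$ and $C=\{c_1,c_2\}$. Let $v^*\to A$, $B\cup C\to v^*$, $A\to B$ and $C\to A$. Add $b_1\to b_2$, $b_1\to c_1$, $b_2\to c_2$, $c_1\to b_2$, $c_2\to b_1$ and $c_1\to c_2$. The hypotheses all hold:
\begin{itemize}
\item $v^*$ is a 3-king, and $B=\nout(A)\cap\nin(v^*)$ and $C$ are as labelled;
\item $|A|=3\ge 8/3$;
\item $out(b_1)=3=out(v^*)$ and $out(b_2)=2$;
\item $\{(b_1,c_1),(b_2,c_2)\}$ is a perfect matching.
\end{itemize}
Yet $out(v^*)=3=\log n$, so $v^*$'s three opponents are exactly $a_1,a_2,a_3$. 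The $A$-vertex it meets in round~2 cannot have beaten another $A$-vertex in round~1, since those are its round-1 and round-3 opponents. So it beat some $b\in B$. The other half then consists of the third $A$-vertex, the other $b$, $c_1$ and $c_2$. That $A$-vertex loses to both $c$'s, so it must take the $b$ in round~1. In round~2 it meets the winner of $c_1$ versus $c_2$ and loses. Hence this is a no-instance.

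Your scheme exposes this at once. The first round you describe has $C$ eliminated by its partners, $v^*$ against an $A$-vertex, and the rest of $A$ paired internally. It leaves $|A'|=(|A|-1)/2=1<|V'|/3$, so hypothesis~1 is not inherited.

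Hypothesis~2 is not inherited in general either, contrary to your claim that the degree bound ``only gets easier''. The $B$-vertices that eliminate $C$ all survive, so $out_B(b)$ does not shrink while $out(v^*)$ roughly halves. For example, take $n=16$, $|A|=7$, $B=\{b_1,\dots,b_4\}$ transitive, $|C|=4$, $A\to B$, $C\to A$, with $b_i\to c_i$ the only $B\to C$ arcs and other arcs arbitrary. All hypotheses hold, since $out(b_i)=6-i\le 7$. But after your first round $b_1$ beats $v^*,b_2,b_3,b_4$, while $v^*$ beats only three survivors.

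Two smaller errors:
\begin{itemize}
\item A vertex $b\in B$ is only guaranteed to beat its own partner in $C$, not ``all of $C$''.
\item The $B$-vertices win their matches against $C$, so the matching eliminates no $B$-vertices.
\end{itemize}

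The missing ingredient is therefore not a smarter pairing or Hall-type argument but extra slack in the hypotheses. The counterexample sits exactly at the boundary: $|A|$ is minimal and $out(b_1)=out(v^*)$. Either some margin was lost in transcribing the cited result, or the cited result itself needs correction. You should check the exact statement in the source before attempting a reconstruction.
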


While the equalities in Lemma~\ref*{lem:3king-existence} can simultaneously hold for general instances, 
we remark that the first equality cannot hold for \DAGins{}s with no sources in tournaments.
Consider a \DAGins{} $(D, v^*)$ where $D$ contains no source. 
Let $A = \nout(v^*)$ and $B = \nout(A)\cap \nin(v^*)$.
Let $b^*$ be the source in $D[\nin(v^*)]$.
Since $b^*$ is not a source in $D$, we know that $b^*\in \nout(A)$.
Thus, $b^*\in B$.
By the second condition of Lemma~\ref*{lem:3king-existence}, we know that $out(b^*)\leq out(v^*) = |A|$.
Note that since $b^*$ is a source in $D[N^-(v^*)]$, its out-degree is at least $|\nin(v^*)| - 1$ plus $|\{v^*\}|$, yielding $out(b^*) \geq |\nin(v^*)|$.
Thus, we have that $|\nin(v^*)|\leq |\nout(v^*)|$, which implies that $|A|\geq |V(D)|/2$.
This gap between the $1/3$ threshold in Lemma~\ref{lem:3king-existence} and the $1/2$ lower bound necessitated by the DAG structure motivates the following theorem.

\begin{theorem}
    Let $(D, v^*)$ be a \DAGins{} such that $D$ has no source. 
    Let $A = \nout(v^*)$ and $B = \nin(v^*)$.
    Then $(D, v^*)$ is guaranteed to be a yes-instance if the following conditions hold.
    \begin{enumerate}
        \item $|A| \geq |V(D)|/3$.
        \item $\forall b\in B, out(b) \leq \frac{|\nin(v^*)|}{|\nout(v^*)|}out(v^*)$.
    \end{enumerate}
\end{theorem}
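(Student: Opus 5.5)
The plan is to build a winning seeding round by round and keep an invariant, instead of applying Lemma~\ref{lem:3king-existence} directly. That lemma needs $out(b)\le out(v^*)$, i.e.\ essentially $|B|\le |A|$, which our hypotheses do not give. First I will unpack condition~2. Since $out(v^*)=|A|$, it reads $out(b)\le |B|$ for every $b\in B$. Let $b^*$ be the source of $D[B]$. Then $out(b^*)\ge |B|$, because $b^*$ beats $B\setminus\{b^*\}$ and $v^*$, so equality holds. Hence every vertex of $A$ beats $b^*$, and $b^*$ beats exactly $(B\setminus\{b^*\})\cup\{v^*\}$. More generally, each $b\in B$ beats at most $|B|-1$ vertices other than $v^*$. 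So $b$ loses to at least $|A|+1-(\text{number of its in-neighbours in }B)$ vertices of $A\cup B$, and in particular many vertices of $A$ beat it. Note that condition~2 is invariant under scaling $|A|$ and $|B|$ by the same factor, which is why I want it as part of a round-by-round invariant.

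I would fix a topological order $b_1=b^*,b_2,\dots$ of $D[B]$ and a topological order of $D[A]$. The goal is to design one round that halves the field while preserving three things: the instance stays \DAGins{} with no source, $|A|\ge |V(D)|/3$, and $out(b)\le \frac{|B|}{|A|}out(v^*)$ for the surviving tournament. In the round, $v^*$ plays a sink of $D[A]$. The remaining vertices are paired as follows. I take as many pairs $(a,b)$ with $a\in A$ beating $b\in B$ as possible, preferring to eliminate the ``late'' vertices of $B$, which have many in-neighbours inside $B$ and hence are the ones that could later violate the degree bound. All leftover vertices of $B$ are paired among themselves along the topological order, so the earlier vertex wins, and leftover vertices of $A$ are paired among themselves likewise. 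The existence of the cross pairs I will argue by Hall's theorem. A set $S\subseteq B$ of vertices late in the order has large $A$-in-neighbourhood, since each of its members beats few vertices overall, and the condition $|A|\ge |V(D)|/3$ supplies enough $A$-vertices. Throughout I keep $b^*$ alive, because it is the vertex of $B$ that every $A$-vertex beats. This guarantees that some $A$-survivor beats it, so no survivor ever becomes a source other than $v^*$. Since each $b$'s out-neighbours that survive shrink at least as fast as $B$ does, I expect the ratio in condition~2 to persist.

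At the end, when four players remain, they should be $v^*$, some $a\in A$, $b^*$, and one further player. The schedule is then $v^*$ beats the fourth player, $a$ beats $b^*$, and in the final $v^*$ beats $a$. This mirrors the shape used in the forward direction of Theorem~\ref{thm:hardness}.

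The main obstacle is showing that the invariant $|A|\ge |V(D)|/3$ survives each round together with the degree bound. Cross matches eliminate $B$-vertices but spend $A$-vertices, while internal $A$ matches halve $A$. Balancing these requires the Hall-type count, and I would first try to prove it by a charging argument on the degree sum $\sum_{b\in B} out(b)\le |B|^2$. If the direct invariant fails, the fallback is to run the process only until $|B|\le |A|$ holds. At that point condition~2 gives $out(b)\le |A|=out(v^*)$, and I would verify the perfect-matching condition of Lemma~\ref{lem:3king-existence} on the residual tournament and finish by applying that lemma.
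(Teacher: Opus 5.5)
Your reading of condition~2 as $out(b)\le |B|$ is correct, and so is the fact that every vertex of $A$ beats the source $b_1$ of $D[B]$. The invariant you want to carry ($|A'|\ge |V(D')|/3$ and $out_{V(D')}(b)\le |B'|$) is essentially the paper's. However, the one-round construction that preserves it, which is the whole content of the proof, is left open (``I expect the ratio to persist''), and the plan you sketch for it points the wrong way.

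Written correctly, your degree count gives $out_A(b_i)\le i-1$, i.e.\ $in_A(b_i)\ge |A|-i+1$. So it is the \emph{early} vertices of $B$ that many $A$-vertices beat, while late ones may beat all of $A$. For example, take $|A|=3$, $|B|=4$, both neighbourhoods transitive, $b_1,b_2,b_3$ beating no $A$-vertex and $b_4$ beating $a_1,a_2,a_3$. Then $out(b_4)=4=|B|$ and the instance satisfies all hypotheses, yet $b_4$ has no $A$-in-neighbour. A Hall argument that eliminates late $B$-vertices via $A$ therefore cannot work; they must be eliminated inside $B$.

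Your heuristic that a survivor's out-neighbours ``shrink at least as fast as $B$'' is also false, because the $A$-vertices beaten by $b$ may all survive. Also, cross matches do not spend $A$-vertices, since the $A$-vertex wins. It is $v^*$'s match and the internal $A$-matches that do.

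The missing idea is that the surviving $B$-vertices must form a prefix $b_1,\dots,b_j$ of the topological order. Then $out(b_i)\le (i-1)+(j-i)+1=j$ in the next round, which is exactly the degree condition again. For a non-prefix survivor set the bound $out_A(b_i)\le i-1$ is too weak: with survivors $b_1,b_3$, where $b_3$ beats two surviving $A$-vertices, you get $out(b_3)=3>2$.

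The paper gets a prefix by pairing each early $b_i$ with a late vertex of $B$, and uses cross matches only on the middle: one on $b_{(|B'|+1)/2}$ when $|B'|$ is odd, or two on $b_{|B'|/2}$ and $b_{|B'|/2+1}$ when $|B'|$ is even. The hypothesis $|A'|\ge |V(D')|/3$, combined with $in_{A'}(b_i)\ge |A'|-i+1$, is exactly what guarantees these middle vertices have (distinct) $A$-in-neighbours. A parity count then gives $|A_{\mathrm{win}}|\ge |V(D')|/6$.

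Your fallback via Lemma~\ref{lem:3king-existence} does not avoid this step. Once $|B|\le |A|$ the endgame is indeed easy, since every $b_i$ then has an $A$-in-neighbour. But reaching that point still needs the invariant-preserving round you have not constructed.
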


\begin{proof}
We prove the theorem by induction on the number of vertices $n = |V(D)|$.
To this end, for any induced subgraph $D'$ of $D$ containing $v^*$, we define an invariant property $\mathcal{I}(D')$ consisting of the following two conditions:
\begin{enumerate}[label=(\roman*)]
    \item $|A'| \geq |V(D')|/3$, where $A' = \nout(v^*) \cap V(D')$;
    \item For all $b \in B' = \nin(v^*) \cap V(D')$, $out_{V(D')}(b) \leq |B'|$.
\end{enumerate}

Our inductive claim is that for any induced subgraph $D'$ containing $v^*$ whose order is a power of two, if $\mathcal{I}(D')$ holds, then $(D', v^*)$ is a yes-instance.

For the base case $n=2$, since $|A'| \geq n/3$, we have $|A'| \geq 1$, meaning $v^*$ defeats the only other player.

For the inductive step, assume that the claim holds for all tournaments of size $n/2$. Now consider an arbitrary tournament of size $n$ that satisfies the invariant. We denote this tournament by $D'$. For simplicity of notation, we write $in'(\cdot)$ and $out'(\cdot)$ for the in-degree and out-degree within $D'$ (i.e., $in_{V(D')}(\cdot)$ and $out_{V(D')}(\cdot)$), and let $A' = \nout(v^*) \cap V(D')$ and $B' = \nin(v^*) \cap V(D')$.

The main idea is to construct a match set $M$ on $V(D')$ to serve as the first round of the knockout tournament. We will then show that the set of winners of this round, denoted by $V_{\rm win}$, satisfies the invariant $\mathcal{I}(D'[V_{\rm win}])$. By the inductive hypothesis, this implies that $v^*$ wins in the smaller tournament $D'[V_{\rm win}]$, and consequently wins in $D'$.

Before presenting the construction of the match set, we first derive a useful bound for the vertices in $B'$.
Let $A' = \{a_1, \dots, a_{|A'|}\}$ and $B' = \{b_1, \dots, b_{|B'|}\}$ be topologically ordered such that $(a_i, a_j) \in E(D')$ and $(b_i, b_j) \in E(D')$ for all $i < j$.
Consider any vertex $b_i \in B'$. We have
$
    out'(b_i) = out'_{A'}(b_i) + out'_{B'}(b_i) + |\{v^*\}| = out'_{A'}(b_i) + (|B'| - i) + 1
$.
Since $out'(b_i) \leq |B'|$ (Condition (ii) of the invariant), we have $out'_{A'}(b_i) + |B'| - i + 1 \leq |B'|$, which simplifies to
\begin{equation}\label{eq:out_A_bound}
    out'_{A'}(b_i) \leq i - 1.
\end{equation}
We distinguish two cases based on the parity of $|A'|$ and $|B'|$.

\textbf{Case 1.} $|B'|$ is odd and $|A'|$ is even.

Let $k = (|B'| + 1)/2$. From Eq.~\eqref{eq:out_A_bound}, we have $out'_{A'}(b_k) \leq k - 1$.
Consequently, the in-degree of $b_k$ from $A'$ satisfies:
\begin{align*}
    in'_{A'}(b_k) &= |A'| - out'_{A'}(b_k) \\
              &\geq |A'| - (|B'|-1)/{2} \\
              &= |A'| - {((n - |A'| - 1) - 1)}/{2} \\
              &= {(3|A'| - n)}/{2} + 1.
\end{align*}

Since $|A'| \geq n/3$, we have $3|A'| \ge n$, which implies $in'_{A'}(b_k) \ge 1$.
Thus, there exists a vertex $a' \in A'$ such that $(a', b_k) \in E(D')$. 
Let $M_A$ be an arbitrary perfect match set on $A' \setminus \{a', a_{|A'|}\}$ and let $M_B = \bigcup_{i = 1}^{k-1} \{(b_i, b_{i + k})\}$. The match set $M$ for the first round is constructed as 
\[
M = \{(v^*, a_{|A'|}), (a', b_k)\} \cup M_A \cup M_B.
\] 

Clearly, the set of winners is $V_{\rm win} = \{v^*\} \cup A_{\rm win} \cup B_{\rm win}$, where $A_{\rm win}$ consists of $a'$ and the winners from $M_A$ (implying $|A_{\rm win}| = |A'|/2$) and $B_{\rm win} = \{b_1, \dots, b_{k-1}\}$.

We show that the invariant $\mathcal{I}(D'[V_{\rm win}])$ holds. Condition (i) holds since $|A_{\rm win}| = |A'|/2 \geq (n/2)/3$.
Now consider Condition (ii). For any $b_i \in B_{\rm win}$, we have $out'_{V_{\rm win}}(b_i) = out'_{A_{\rm win}}(b_i) + out'_{B_{\rm win}}(b_i) + 1$. Note that $out'_{A_{\rm win}}(b_i) \leq out'_{A'}(b_i) \leq i-1$ (by Eq.~\eqref{eq:out_A_bound}) and $out'_{B_{\rm win}}(b_i) = |B_{\rm win}| - i$. Thus, it holds that $out'_{V_{\rm win}}(b_i) \leq (i-1) + (|B_{\rm win}| - i) + 1 = |B_{\rm win}|$, which proves Condition (ii).

\textbf{Case 2.} $|B'|$ is even and $|A'|$ is odd.

The proof proceeds analogously to Case 1. The main difference is that, since $|B'|$ is even, we need to identify two arcs from $A'$ to $B'$ (instead of one).

Let $k = |B'|/2$. By Eq.~\eqref{eq:out_A_bound}, $out'_{A'}(b_{k+1}) \leq k$.
Similar to Case 1, we calculate the in-degree of $b_{k+1}$ from $A'$:
\begin{align*}
    in'_{A'}(b_{k+1}) \geq |A'| - {|B'|}/{2} = {(3|A'| - n + 1)}/{2}.
\end{align*}

Since $|A'|$ is odd and $n$ is even, $|A'| \geq n/3$ implies $3|A'| \geq n+1$. Thus, $in'_{A'}(b_{k+1}) \geq 1$, which implies there exists a vertex $a' \in A'$ such that $(a', b_{k+1}) \in E(D')$.
Similarly, for $b_k$, we have $in'_{A'}(b_k) \geq |A'| - (k-1) = {(3|A'| - n + 3)}/{2}$. Since $3|A'| \geq n+1$, $in'_{A'}(b_k) \geq 2$. This ensures that there exists a vertex $a'' \in A' \setminus \{a'\}$ such that $(a'', b_k) \in E(D')$.
Let $M_A$ be an arbitrary perfect match set on $A' \setminus \{a', a'', a_{|A'|}\}$ and let $M_B = \bigcup_{i = 1}^{k-1} \{(b_i, b_{i + k+1})\}$. We construct 
\[M = \{(v^*, a_{|A'|}), (a', b_{k+1}), (a'', b_k)\} \cup M_A \cup M_B.\]
Clearly, the set of winners is $V_{\mathrm{win}} = \{v^*\} \cup A_{\mathrm{win}} \cup B_{\mathrm{win}}$, where $A_{\mathrm{win}}$ consists of $a', a''$ and the winners from $M_A$ (implying $|A_{\mathrm{win}}| = (|A'|+1)/2$) and $B_{\mathrm{win}} = \{b_1, \dots, b_{k-1}\}$.

We then check the conditions in $\mathcal{I}(D'[V_{\mathrm{win}}])$. Condition (i) holds since $|A_{\mathrm{win}}| > |A'|/2 \geq (n/2)/3$.
Condition (ii) holds by the same argument as in Case 1 (by applying Eq.~\eqref{eq:out_A_bound}).

Since $\mathcal{I}(D'[V_{\mathrm{win}}])$ holds in both cases, by the inductive hypothesis, $(D'[V_{\mathrm{win}}], v^*$) is a yes-instance.
Thus, $(D', v^*$) is a yes-instance.
\end{proof}

Now we consider the case that the favorite player $v^*$ is a king.
For general instances, the following known result provides a sufficient condition for $v^*$ to be a winner.

\begin{lemma}[\cite{kim2015fixing}]\label{lem:king-existence}
    Let $(D, v^*)$ be an instance where $v^*$ is a king. 
    Let $k$ be the cardinality of the maximum matching from $\nout(v^*)$ to $\nin(v^*)$.
    Then, $(D, v^*)$ is guaranteed to be a yes-instance if {$|\nout(v^*)|+k > |V(D)|/2$}.
    Furthermore, a winning seeding for $v^*$ can be found in polynomial time.
\end{lemma}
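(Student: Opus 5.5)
We prove the statement by induction on $\log |V(D)|$. We construct a single first-round match set $M$ so that the surviving set $V_{\rm win}$ again satisfies the hypothesis, namely that $v^*$ is a king and $|\nout(v^*)|+k>|V|/2$ within $D[V_{\rm win}]$. The induction hypothesis then gives a winning seeding for the half-size tournament, and prepending $M$ gives one for $D$. Since each step only needs a maximum matching, a greedy pairing and a pairing of leftovers, all polynomial, this also gives the polynomial-time construction.

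Write $A=\nout(v^*)$, $B=\nin(v^*)$ and $n=|V(D)|$. Let $\mu$ be a maximum matching from $A$ to $B$ of size $k$, and let $B_\mu\subseteq B$ be the matched set. The base case $n=2$ is immediate: $|A|+k>1$ forces $|A|\ge1$, so $v^*$ beats the other player.

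For the first round, $v^*$ plays some $a_0\in A$ chosen outside $V(\mu)$ if possible. Every matched pair $(a,b)\in\mu$ plays, so $a$ eliminates $b$. The unmatched vertices of $B$ are paired among themselves, and so are the leftover vertices of $A$. Parity is fixed up by at most one cross match, or by letting $v^*$ take a matched $a$. Then all of $B_\mu$ is eliminated, and so is about half of $B\setminus B_\mu$.

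Counting survivors gives $|A_{\rm win}|=k+\lceil(|A|-k-1)/2\rceil$ up to $\pm1$, and $|B_{\rm win}|\approx(|B|-k)/2$. Since $|A|+k>n/2$ is equivalent to $|A|+k>|B|+1-k+\ldots$, essentially $2k+|A|-|B|>0$, we should get $|A_{\rm win}|>|B_{\rm win}|$. That is already half of the new inequality, because the new matching size is at least $0$.

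The main obstacle is preserving the king property together with a large enough new matching $k'$. Each surviving $b\in B_{\rm win}$ must still be beaten by some surviving $a\in A_{\rm win}$, and we need $|A_{\rm win}|+k'>n/4$. I would first try to pair the unmatched vertices of $B$ so that the winner of each pair is dominated by some vertex of $A$ that survives. A natural choice is to let every $b$ be beaten by the $A$-survivors: in the matched pairs $a$ beats $b$ and survives, so $a$ is available.

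Concretely, I would pick a maximum matching in $D[A_{\rm win}\cup B_{\rm win}]$ and argue it has size at least $\min(|B_{\rm win}|,\cdot)$ by a Hall/König argument. The argument uses that in the original graph the unmatched $B$-vertices have all their $A$-in-neighbours inside $V(\mu)$ (maximality of $\mu$, via augmenting paths), and these $\mu$-vertices all survive. If Hall's condition gives $k'\ge |B_{\rm win}|$ minus the deficiency, the inequality $|A_{\rm win}|+k'>n/4$ follows from $|A|+k>n/2$ by the counting above.

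If this direct invariant fails, the fallback is to follow Kim–Vassilevska Williams's original argument. That argument strengthens the induction hypothesis to track the matching explicitly, keeping $\mu$ restricted to survivors. Of all the steps, this one I expect to require the most care, especially the parity cases $|A|$ odd or even and whether $a_0$ must come from $V(\mu)$.
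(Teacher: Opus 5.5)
The paper gives no proof of this lemma; it is cited from Kim and Vassilevska Williams, so there is no in-paper argument to compare with. Judged on its own, your proposal has a genuine gap exactly where you suspect one: the case where $v^*$'s first-round opponent must come from $V(\mu)$. For the statement as quoted here, that gap cannot be closed.

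\textbf{The case you can handle.} It is simpler than you expect and needs no Hall or K\"onig argument. Suppose the maximum matching $\mu$ leaves some $a_0\in\nout(v^*)$ free and $v^*$ plays $a_0$. Then every $b\in\nin(v^*)\setminus V(\mu)$ has all its in-neighbours from $\nout(v^*)$ inside $V(\mu)$, since otherwise that arc would enlarge $\mu$. All of these vertices survive round 1, so the king property passes to $D[V_{\rm win}]$. For the same reason, the single cross match needed for parity is won by the $\nin(v^*)$-player, so letting $v^*$ take a matched vertex is never needed.

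Write the hypothesis as $|\nout(v^*)|+2k\ge|\nin(v^*)|+2$. The exact count then gives $|A_{\rm win}|\ge|B_{\rm win}|$. Since $|A_{\rm win}|+|B_{\rm win}|=n/2-1$ is odd for $n\ge 4$, in fact $|A_{\rm win}|>|B_{\rm win}|$. A new matching of size $1$, supplied by the inherited king property, already gives $|A_{\rm win}|+k'>n/4$; if $B_{\rm win}=\emptyset$, $v^*$ wins trivially. Moreover $k'\le|B_{\rm win}|<|A_{\rm win}|$, so the next maximum matching again leaves an out-neighbour of $v^*$ free, and every later round is of the same good type. Completed this way, your induction proves the lemma under the extra hypothesis $k<|\nout(v^*)|$.

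\textbf{The case that fails.} What remains is a maximum matching that saturates $\nout(v^*)$ at the top level. There $v^*$ must sacrifice a matched $a_1$. Its freed partner, or an unmatched $b$ whose only in-neighbour in $\nout(v^*)$ is $a_1$, can then survive with nobody left to eliminate it. This is not a parity technicality: the statement is false there.

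Take $n=8$, $\nout(v^*)=\{a_1,a_2,a_3\}$ and $\nin(v^*)=\{b_1,b_2,b_3,u\}$. Put in the arcs $(a_1,b_1),(a_1,u),(a_2,b_2),(a_3,b_3)$, and direct all other pairs between the two sets into $\nout(v^*)$. Let $b_2$ and $b_3$ each beat $b_1$ and $u$; remaining arcs are arbitrary. Then $v^*$ is a king, $k=3$, and $|\nout(v^*)|+k=6>4$.

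Now check that $v^*$ cannot win.
\begin{itemize}
\item $v^*$ needs three distinct victims in $\nout(v^*)$. So its half of the bracket is $\{v^*,a_i,a_j,x\}$, where $a_j$ beats $x\in\nin(v^*)$ in round 1.
\item The other half consists of the final opponent and three players of $\nin(v^*)$, two of whom the final opponent must beat.
\item Only $a_1$ beats two players of $\nin(v^*)$, so the final opponent is $a_1$, with victims $b_1$ and $u$.
\item Then $j\in\{2,3\}$ forces $x=b_j$. So $b_{5-j}$ sits in $a_1$'s half and would have to lose its first match to $b_1$ or $u$, which it does not.
\end{itemize}
Hence $(D,v^*)$ is a no-instance.

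So no choice of first round rescues the saturated case. Falling back on the original paper cannot help either, because the lemma as quoted is itself false. A correct statement needs an additional hypothesis such as $k<|\nout(v^*)|$, and under that hypothesis your argument, completed as above, suffices.
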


For a \DAGins{} where $v^*$ is a king, we establish the following structural result.

\begin{theorem}
    Let $(D, v^*)$ be a \DAGins{} where $v^*$ is a king. 
    Then, $(D, v^*)$ is guaranteed to be a yes-instance if $\forall b\in \nin(v^*), out(b) < 2out(v^*)$.
    Furthermore, a winning seeding for $v^*$ can be found in polynomial time.
\end{theorem}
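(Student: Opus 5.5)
The plan is to reduce the statement to Lemma~\ref{lem:king-existence}. Write $A=\nout(v^*)$, $B=\nin(v^*)$, $m=|B|$, so $|V(D)|=|A|+m+1$. Since $v^*$ is a king, Lemma~\ref{lem:king-existence} applies and also gives the polynomial-time construction of a winning seeding. It therefore suffices to show that the maximum matching $k$ from $A$ to $B$ (using arcs $(a,b)$ with $a\in A$, $b\in B$) satisfies $|A|+k>(|A|+m+1)/2$. Equivalently, we need
\[
2k>m+1-|A|.
\]

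\textbf{Step 1 (degree bound).} Since $D[B]$ is acyclic, I order $B=\{b_1,\dots,b_m\}$ topologically, so that $(b_i,b_j)\in E(D)$ for all $i<j$. Then
\[
out(b_i)=out_A(b_i)+(m-i)+1.
\]
The hypothesis $out(b_i)<2out(v^*)=2|A|$ gives $out_A(b_i)\le 2|A|-m+i-2$. Hence
\[
in_A(b_i)=|A|-out_A(b_i)\ge m-i+2-|A|.
\]
This is the analogue of Eq.~\eqref{eq:out_A_bound} in the previous proof, now with the weaker hypothesis that the factor is $2$.

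\textbf{Step 2 (greedy matching).} Let $j=m+1-|A|$. If $j\le 0$, then $|A|\ge m+1$. When $|A|>m+1$ the required inequality holds even with $k=0$. When $|A|=m+1$ we need $k\ge 1$, which follows because $v^*$ is a king: if $m\ge1$, then $b_1$ has an in-neighbour in $A$ (or use the degree bound above).

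If $j\ge 1$, note $j\le m$, since $|A|\ge 1$ (otherwise $v^*$ could not be a king unless $D$ is trivial). I then process $b_j,b_{j-1},\dots,b_1$ in this order, matching each to an unused in-neighbour in $A$. When $b_i$ is processed, exactly $j-i$ vertices of $A$ are already used. The vertex $b_i$ has at least $m-i+2-|A|=j-i+1>j-i$ in-neighbours in $A$, so a free one always exists. This yields $k\ge j=m+1-|A|>(m+1-|A|)/2$, which is the required inequality.

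\textbf{Main obstacle.} The only delicate point is the exact counting in Step~1: the ``$+1$'' from the arc $(b_i,v^*)$ and the strictness of $out(b)<2out(v^*)$. These must combine to give $j-i+1$ rather than $j-i$ choices, so I would double-check the boundary index $i=j$. The boundary case $j\le 0$ also needs care. Once this is in place, Lemma~\ref{lem:king-existence} finishes the proof, including the polynomial-time claim.
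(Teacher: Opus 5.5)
Your proposal is correct and is essentially the paper's proof. It makes the same reduction to Lemma~\ref{lem:king-existence} and derives the same bound $in_A(b_i)\ge m+2-|A|-i$ (the paper's $in_A(b_i)>n-2|A|-i$). It then builds the same greedy matching of $b_j,b_{j-1},\dots,b_1$ with $j=n-2|A|$, which is the paper's $p$ with $\pi(q)=1+p-q$; your counting yields $|A|\ge j$ automatically, where the paper argues $\min(p,|A|)=p$ separately.

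The only slip is the parenthetical in the case $|A|=m+1$. There the degree bound gives only $in_A(b_1)\ge 0$, so you genuinely need the king property, as the paper uses. The degenerate case $m=0$ (that is, $n=2$ and $v^*$ a source) is trivial and should be set aside; the paper also glosses over it.
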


\begin{proof}
    
    Let $n=|V(D)|$. Let $A = \nout(v^*)$ and $B = \nin(v^*)$.

    With Lemma~\ref{lem:king-existence}, it is sufficient to construct a matching of size $k$ from $A$ to $B$ such that $|A| + k > n / 2$.
    Since $v^*$ is a king, there should be at least one arc from $A$ to $B$.
    If $|A|\geq n/2$, we can construct a matching of size $k\geq 1$, and 
    then $|A|+k>n/2$ follows immediately.
    Next, we consider the remaining case $|A|<n/2$.

    Let $B = \{b_1, b_2, \dots, b_{|B|}\}$ such that for each $1\leq i < j\leq |B|$, $(b_i, b_j)\in E(D)$. For each $b_i\in B$, we have $in_B(b_i)+out_B(b_i)=|B|-1$ and $in_B(b_i)=i-1$, and thus 
    \begin{align*}
        out(b_i) &= out_A(b_i) + out_B(b_i) + |\{v^*\}|\\
        &= (|A| - in_A(b_i)) + (|B| - 1 - in_B(b_i)) + |\{v^*\}|\\
        &= (|A| - in_A(b_i)) + (|B| - i + 1).
    \end{align*}
    Since $out(b_i) < 2out(v^*) = 2|A|$, we have 
    \[
        (|A| - in_A(b_i)) + (|B| - i + 1) < 2|A|.
    \]
    With $|B| = n - |A| - 1$, we get 
    \begin{align}\label{eq:in_A_bound}
        in_A(b_i) >  n - 2|A| - i.
    \end{align}

    Let $p=n-2|A|$.
    We define the indexing function $\pi: [p]\to [p]$ as
    $\pi(q)=1+n-2|A|-q$. 
    Note that for each $q\in [p]$, it holds that 
    \[in_A(b_{\pi(q)})>n-2|A|-\pi(q)= q-1,\]
    which implies $in_A(b_{\pi(q)})\geq q$.

    We construct a matching $M$ of size $k=\min(p, |A|)$ from $A$ to $B$, where $M = \{(a_q, b_{\pi(q)}) : q \in [k]\}$, by picking an arbitrary $a_q \in \nin(b_{\pi(q)}) \cap A \setminus \{a_1, \dots, a_{q-1}\}$ iteratively. 
    The existence of $a_q$ is guaranteed by a simple counting argument: since $|\nin(b_{\pi(q)})\cap A|=in_{A}(b_{\pi(q)})\geq q$, there always exists an in-neighbor of $b_{\pi(q)}$ in $A$ that has not been matched in the previous $q-1$ steps.

    We then show that $|A|+k>n/2$ holds under $k=\min(p,|A|)=\min(n-2|A|,|A|)$.
    By substituting $i=1$ into Eq.~\eqref{eq:in_A_bound}, we obtain $in_A(b_1) > n - 2|A| - 1$. With $in_A(b_1) \leq |A|$, it follows that $n - 2|A| - 1 < |A|$, which simplifies to $n-2|A|\leq |A|$. Thus, $k=\min(n-2|A|, |A|)=n-2|A|$. Since $|A| < n/2$, we have $|A| + k = n - |A|>n/2$.
    This completes the proof.
\end{proof}

\textbf{Remark.}
The bound $\forall b\in \nin(v^*), out(b) < 2out(v^*)$ is tight.
Specifically, if we replace this bound with $\forall b\in \nin(v^*), out(b) \leq 2out(v^*)$, then there exists a no-instance.
Consider a \DAGins{} $(D, v^*)$ where $|\nin(v^*)| = |V(D)|/2$ and $|\nout(v^*)| = |V(D)|/2 - 1$.
Let $a'\in \nout (v^*)$ be a specific vertex.
We construct the arc set such that for each vertex $b\in \nin (v^*)$, $(a', b)\in E(D)$ and for each vertex $a\in \nout(v^*)\setminus \{a'\}$, $(b, a)\in E(D)$.
See Figure~\ref{fig:tight} for an illustration.
\input{./example-tight-ins}

\section{Conclusion}
In this paper, we resolved the parameterized complexity of TFP with respect to several subset feedback set-based parameters centered around the favorite player \(v^*\). We showed that TFP remains $\NP$-hard even when the sfas-v number is constant, answering an open question in the negative. We also established hardness for constant sfas-in and sfas-out numbers. On the positive side, we proved that TFP becomes fixed-parameter tractable when parameterized by the sum of the sfas-in and sfas-out numbers, and identified  sufficient conditions for \(v^*\) to win in the neighbor-acyclic case. It remains open whether TFP is NP-hard when both \(D[N_{\mathrm{in}}(v^*)]\) and \(D[N_{\mathrm{out}}(v^*)]\) are acyclic.

\bibliographystyle{splncs04}
\bibliography{26}

\end{document}

%% file: fig_survey.tex
\def\topspread{-25pt} 

\def\bottomshift{0pt} 

\begin{figure}[!t]
    \centering
\begin{tikzpicture}[
    thick,
    scale=0.8, transform shape,
    texts/.style={rectangle, sharp corners, inner sep=6pt, draw=none, text=black},
    box/.style={texts, draw=black, fill=white, line width=0.8pt},
    arcs/.style={-{Latex[length=5pt, width=3pt]}, black!85, shorten <=1pt, shorten >=1pt, thick},
]

    \node[box, xshift=\bottomshift] (sfas-out) {sfas-out {num.}};
    
    \node[texts, anchor=south east, inner sep=3pt, xshift=20pt] at (sfas-out.north west) (sfas-out-ref) {\textbf{Coro.}~\ref{coro:hardness-sfas-in}};

    \node[box, anchor=west, xshift=15pt] at (sfas-out.east) (sfas-in) {sfas-in {num.}};
    \node[texts, anchor=south west, inner sep=3pt, xshift=-20pt] at (sfas-in.north east) (sfas-in-ref) {\textbf{Thm.}~\ref{thm:hardness-2}};

    \node[box, anchor=east, xshift=-\topspread-\bottomshift, yshift=60pt] at (sfas-out.west) (fvs) {fvs {num.}};
    \node[texts, anchor=east, xshift=-2pt] at (fvs.west) (fvs-ref) {(Zehavi 2023)};
    
    \node[box, yshift=35pt] at (fvs) (fas) {fas {num.}}; 
    \node[texts, anchor=east, xshift=-2pt, align=center] at (fas.west) (fas-ref) {(Ramanujan and\\Szeider 2017)};

    \node[box, anchor=west, xshift=\topspread-\bottomshift, yshift=70pt, align=center] at (sfas-in.east) (deg) {in-degree\\out-degree};
    \node[texts, anchor=south, yshift=-2pt, align=center] at (deg.north) (deg-ref) {(Wang et al. 2026)};

    \node[box, yshift=-35pt] at ($(sfas-out)!0.5!(sfas-in)$) (sfas) {sfas-v {num.}};
    \node[box, yshift=-35pt] at (sfas) (sfvs) {sfvs-v {num.}};
    \node[texts, anchor=east, inner sep=3pt, align=center] at (sfas.west) (sfas-ref) {\textbf{Thm.}~\ref{thm:hardness}}; 
    
    \node[texts, anchor=west, inner sep=3pt, align=center] at (sfvs.east) (sfvs-ref) {\textbf{Coro.}~\ref{coro:hardness-sfvs-v}};

    \path let \p1 = (fas-ref.west), \p2 = (sfas-out-ref.west) in coordinate (real-left) at ({min(\x1,\x2)}, 0);
    \path let \p1 = (deg-ref.east), \p2 = (sfas-in-ref.east) in coordinate (real-right) at ({max(\x1,\x2)}, 0);

    \coordinate (top-left) at ($({fas.north west} -| {real-left}) + (-5pt, 5pt)$);
    \coordinate (top-right) at ($({top-left.north} -| {real-right}) + (5pt, 0)$);
    \coordinate (bottom-left) at ($({top-left.north} |- {sfvs.south}) + (0, -5pt)$);
    \coordinate (bottom-right) at ($({bottom-left.south} -| {top-right})$);
    
    \coordinate (left-mid) at ($(top-left |- sfas-out.north) + (0, 20pt)$);
    \coordinate (right-mid) at  ($({left-mid} -| {top-right})$);

    \node[anchor=south west, inner sep=6pt, text=green!50!black, font=\bfseries] at (left-mid) {$\FPT$};
    \node[anchor=north west, inner sep=6pt, text=red!70!black, font=\bfseries] at (left-mid) {Para-$\NP$-hard};

    \draw[arcs] (sfvs) -- (sfas);
    \draw[arcs] (sfas) -- (sfas-out);
    \draw[arcs] (sfas) -- (sfas-in);
    \draw[arcs] (fvs) -- (fas);

    \draw[arcs] (sfvs) to[out=180, in=230] (fvs);
    \draw[arcs] (sfas-out) to[out=90, in=-30] (fas);
    \draw[arcs] (sfas-in) to[out=90, in=0] (fas);
    \draw[arcs] (sfas) to [out=0, in=310] (deg);

    \begin{scope}[on background layer]
        \fill[green!10]
            ($(top-left) + (8pt, 0)$) 
            -- ($(top-right) + (-8pt, 0)$) arc[start angle=90, end angle=0, radius=8pt]
            -- (right-mid)
            -- (left-mid)
            -- ($(top-left) + (0pt, -8pt)$) arc[start angle=180, end angle=90, radius=8pt]
            -- cycle; 

        \fill[red!10]
            (left-mid)
            -- (right-mid)
            -- ($(bottom-right) + (0pt, 8pt)$) arc[start angle=360, end angle=270, radius=8pt]
            -- ($(bottom-left) + (8pt, 0pt)$) arc[start angle=270, end angle=180, radius=8pt]
            -- cycle;
    \end{scope}
\end{tikzpicture}
    \caption{
        An illustration of the hierarchy of the eight parameters, where an arc from parameter $x$ to parameter $y$ denotes $x \leq y$. The green region (upper section) marks parameters for which TFP is proven $\FPT$, while the red region (lower section) indicates para-$\NP$-hard cases (our results).
    }
    \label{fig:diagram}
\end{figure}

%% file: res-ins.tex
    \begin{figure}[t]
        \centering
        \begin{tikzpicture}[
            thick,
            scale=0.90, transform shape,
            dot/.style={circle, draw, fill, minimum size=3pt, inner sep=0pt},
            every node/.style={circle, draw, inner sep=2pt, minimum size=10pt},
            texts/.style={fill=none, draw=none, rectangle},
            arcs/.style={-{Stealth[length=6pt, inset=3pt, round, scale width=1.4]}, black, shorten <=2pt, shorten >=2pt, thick},
            every fit/.style={rectangle, dashed, rounded corners=5pt, thick, draw=black, inner sep=5pt}
        ]

            \node[dot] at (50pt, 50pt) (v-star) {};
            \node[texts, anchor=south west] at (v-star.north east) {$v^*$};

            \node[dot] at (20pt, 30pt) (b1) {};                
            \node[dot] at (20pt, 15pt) (b2) {};               
            \node[texts] at (20pt, 0pt) {$\vdots$};            
            \node[dot] at (20pt, -20pt) (b3) {};              

            \node[dot] at (80pt, 30pt) (b11) {};                
            \node[texts] at (80pt, 18pt) {$\vdots$};           
            \node[dot] at (80pt, 0pt) (b22) {};

            \node[texts, anchor=east, xshift=-8pt] at (b1.west) {$a^*$};

            \node[texts, anchor=west, xshift=8pt] at (b11.east) {$b^*$};

            \node[fit=(b1) (b3)] (f1) {};                            
            \node[texts, anchor=south, yshift=-18pt] (nout) at (f1.south) {$\nout(v^*)$}; 
            \node[fit=(b11) (b22)] (f2) {};                          
            \node[texts, anchor=south, yshift=-18pt] (nout) at (f2.south) {$\nin(v^*)$}; 

            \draw[arcs] (v-star.west) to[out=180, in=90, looseness=1] (f1.north);

            \draw[arcs] (f2.north) to[out=90, in=0, looseness=1] (v-star.east);

            \draw[arcs, red] (b1) to[out=0, in=180, looseness=1] (b11);
            \draw[arcs] (b11) -- (b2);
            \draw[arcs] (b11) -- (b3);
            \draw[arcs] (b22) -- (b1);
            \draw[arcs] (b22) -- (b2);
            \draw[arcs] (b22) -- (b3);
        \end{tikzpicture}
        \caption{An illustration for a \res{} instance $(D, v^*)$.}
        \label{fig:res}
    \end{figure}

%% file: example-special-ins.tex
  \begin{tikzpicture}[
            thick,
            scale=0.65,
            transform shape,
            texts/.style={rectangle, inner sep=2pt, draw=none, font=\large},
            brace/.style={decorate, decoration={calligraphic brace, raise=4pt, amplitude=6pt, mirror}},
            note/.style={texts, anchor=north, yshift=-12pt}
        ]
            \node[texts] at (60:2) (v-star) {$v^*$};
            \node[texts, xshift=2.5cm] at (v-star) (a1) {$a'$};
            \node[texts, xshift=5cm] at (v-star) (a-star) {$a^*$};
            \node[texts, xshift=7.5cm] at (v-star) (b-star) {$b^*$};
            
            \node[texts, yshift=0.8cm] at ($(v-star)!0.5!(a1)$) (v-star-2) {$v^*$};
            \node[texts, yshift=0.8cm] at ($(a-star)!0.5!(b-star)$) (a-star-2) {$a^*$};
            \node[texts, yshift=0.8cm] at ($(v-star-2)!0.5!(a-star-2)$) (v-star-3) {$v^*$};

            \draw (0, 0) -- (v-star) -- (2, 0) -- (0, 0);
            \draw (2.5, 0) -- (a1) -- (4.5, 0) -- (2.5, 0);
            \draw (5, 0) -- (a-star) -- (7, 0) -- (5, 0);
            \draw (7.5, 0) -- (b-star) -- (9.5, 0) -- (7.5, 0);

            \draw (v-star) -- (v-star-2) -- (a1);
            \draw (a-star) -- (a-star-2) -- (b-star);
            \draw (v-star-2) -- (v-star-3) -- (a-star-2);

            \draw[brace] (0, 0) --node[note] {$\nout(v^*)\cup \{v^*\}$} (7, 0);

            \draw[brace] (7.5, 0) --node[note] {$\nin(v^*)$} (9.5, 0);
        \end{tikzpicture}

%% file: example-tight-ins.tex
    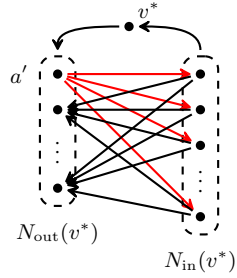
\begin{figure}[t]
        \centering
        \begin{tikzpicture}[
            thick,
            scale=0.90, transform shape,
            dot/.style={circle, draw, fill, minimum size=3pt, inner sep=0pt},
            every node/.style={circle, draw, inner sep=2pt, minimum size=10pt},
            texts/.style={fill=none, draw=none, rectangle},
            arcs/.style={-{Stealth[length=6pt, inset=3pt, round, scale width=1.4]}, black, shorten <=2pt, shorten >=2pt, thick},
            every fit/.style={rectangle, dashed, rounded corners=5pt, thick, draw=black, inner sep=5pt}
        ]

            \node[dot] at (50pt, 50pt) (v-star) {};
            \node[texts, anchor=south west] at (v-star.north east) {$v^*$};

            \node[dot] at (80pt, 30pt) (b11) {};                
            \node[dot] at (80pt, 15pt) (b22) {};               
            \node[dot] at (80pt, 0pt) (b33) {};               
            \node[texts] at (80pt, -12pt) {$\vdots$};            
            \node[dot] at (80pt, -30pt) (b44) {};              

            \node[dot] at (20pt, 30pt) (b1) {};                
            \node[dot] at (20pt, 15pt) (b2) {};                
            \node[texts] at (20pt, 0pt) {$\vdots$};           
            \node[dot] at (20pt, -18pt) (b3) {};                 

            \node[texts, anchor=east, xshift=-8pt] at (b1.west) {$a'$};

            \node[fit=(b1) (b3)] (f1) {};                            
            \node[texts, anchor=south, yshift=-18pt] (nout) at (f1.south) {$\nout(v^*)$}; 
            \node[fit=(b11) (b44)] (f2) {};                          
            \node[texts, anchor=south, yshift=-18pt] (nout) at (f2.south) {$\nin(v^*)$};

            \draw[arcs] (v-star.west) to[out=180, in=90, looseness=1] (f1.north);

            \draw[arcs] (f2.north) to[out=90, in=0, looseness=1] (v-star.east);

            \draw[arcs, red] (b1) to (b11);
            \draw[arcs, red] (b1) to (b22);
            \draw[arcs, red] (b1) to (b33);
            \draw[arcs, red] (b1) to (b44);
            \draw[arcs] (b11) -- (b2);
            \draw[arcs] (b22) -- (b2);
            \draw[arcs] (b33) -- (b2);
            \draw[arcs] (b44) -- (b2);
            \draw[arcs] (b11) -- (b3);
            \draw[arcs] (b22) -- (b3);
            \draw[arcs] (b33) -- (b3);
            \draw[arcs] (b44) -- (b3);
        \end{tikzpicture}
        \caption{An illustration for a no-insatnce $(D, v^*)$ where $v^*$ is a king and $\forall b\in \nin(v^*), out(b) \leq 2out(v^*)$.}
        \label{fig:tight}
    \end{figure}